\newtheorem{theorem}{Theorem}[section]
\newtheorem{lemma}[theorem]{Lemma}
\newtheorem{corollary}[theorem]{Corollary}
\newtheorem{definition}[theorem]{Definition}
\newtheorem{construction}[theorem]{Construction}
\newtheorem{remark}[theorem]{Remark}
\newtheorem{example}[theorem]{Example}
\newcommand{\ml}{\mathcal}
\newcommand{\lc}{\lceil}
\newcommand{\rc}{\rceil}
\newcommand{\lf}{\lfloor}
\newcommand{\rf}{\rfloor}
\newcommand{\tb}{\textbf}
\def\Wolg{Without loss of generality}
\begin{document}
\title{Optimal binary linear locally repairable codes with disjoint repair groups}
\author{Jingxue Ma and Gennian Ge
\thanks{The research of G. Ge was supported by the National Natural Science Foundation of China under Grant Nos.11431003 and 61571310, Beijing Scholars Program, Beijing Hundreds of Leading Talents Training Project of Science and Technology, and Beijing Municipal Natural Science Foundation.}
\thanks{J. Ma is with the School of Mathematical Sciences, Zhejiang University, Hangzhou 310027, China (email: majingxue@zju.edu.cn).}
\thanks{G. Ge is with the School of Mathematical Sciences, Capital Normal University, Beijing 100048, China (e-mail: gnge@zju.edu.cn).}
}

\date{}\maketitle

\begin{abstract}

In recent years, several classes of codes are introduced to provide some fault-tolerance and guarantee system reliability in distributed storage systems, among which locally repairable codes (LRCs for short) play an important role. However, most known constructions are over large fields with sizes close to the code length, which lead to the systems computationally expensive. Due to this, binary LRCs are of interest in practice.

In this paper, we focus on binary linear LRCs with disjoint repair groups. We first derive an explicit bound for the dimension $k$ of such codes, which can be served as a generalization of the bounds given in \cite{Goparaju2014, Wang2017, Zeh2015}. We also give several new constructions of binary LRCs with minimum distance $d=6$ based on weakly independent sets and partial spreads, which are optimal with respect to our newly obtained bound. In particular, for locality $r\in \{2,3\}$ and minimum distance $d=6$, we obtain the desired optimal binary linear LRCs with disjoint repair groups for almost all parameters.

\medskip
\noindent {{\it Keywords and phrases\/}: Locally repairable codes, disjoint repair group, weakly independent set, partial spread, distributed storage systems.
}\\
\smallskip

\end{abstract}


\section{Introduction}\label{secintro}

Modern large scale distributed storage systems, such as data centers, store data in a redundant form to ensure reliability against node failures. The simplest and most commonly used technique is $3$-replication, which has clear advantages due to its simplicity and fast recovery from data failures. However, this strategy entails large storage overhead and is nonadaptive for modern systems supporting the ``Big Data" environment.

To achieve better storage efficiency, erasure codes are employed, such as Windows Azure \cite{5}, Facebook's Hadoop cluster \cite{14}, where the original data are partitioned into $k$ equal-sized fragments and then encoded into $n$ fragments $(n\ge k)$ stored in $n$ different nodes. It can tolerate up to $d-1$ node failures, where $d$ is the minimum distance of the erasure code. Particularly, the maximum distance separable (MDS) code is a kind of erasure code that attains the maximal minimum distance with respect to the Singleton bound and thus provides the highest level of fault tolerance for given storage overhead.

However, if a node fails, which is the most common failure scenario, we may recover it by looking at the information on any $k$ remaining nodes. This is a slow and time consuming recovery process since we have to read data from $k$ nodes when we only want to recover data on one node. There are two famous metrics in the literature to quantify the efficiency of recovering, which are the repair bandwidth and repair locality. In this work, we study codes with small repair locality.

The concept of codes with locality was introduced by Gopalan et al. \cite{Gopalan12}, Oggier and Datta \cite{7}, and Papailiopoulos et al.~\cite{10}. The $i$-th coordinate of a code has locality $r$ if it can be recovered by accessing at most $r$ other coordinates. In this paper, an $[n,k,d]$ linear code with all-symbol locality $r$ is denoted by an $[n,k,d;r]$ LRC. When $r\ll k$, it greatly reduces the disk I/O complexity for repair.

When considering the fault tolerance level, the minimum distance is a key metric for LRCs. Gopalan et al. \cite{Gopalan12} first derive the following upper bound for codes with information locality:
\begin{equation}\label{singletonbound}
d\leq n-k-\lc\frac{k}{r}\rc+2,
\end{equation}
which is also called as Singleton-like bound since it degenerates to classical Singleton bound when $r=k.$ Later, in \cite{2,9}, the bound (\ref{singletonbound}) is generalized to vector codes and nonlinear codes. Although it certainly holds for all LRCs, it is not tight in many cases. The tightness of the bound (\ref{singletonbound}) is studied in \cite{16,Wang1510}.

We say an LRC is $d$-optimal if it satisfies bound (\ref{singletonbound}) with equality for given $n$, $k$ and $r$. For the case $(r + 1)|n$, $d$-optimal LRCs are constructed explicitly in \cite{18} and \cite{15} by using Reed-Solomon codes and Gabidulin codes respectively. However, both constructions are built over a finite field whose size is an exponential function of the code length $n$. In \cite{Tamo1408}, for the same case $(r + 1)|n$, the authors construct a $d$-optimal code over a finite field of size sightly greater than $n$ by using ``good'' polynomials. This construction can be extended to the case $(r+1)\nmid n$ with the minimum distance $d\ge n-k-\lceil \frac{k}{r}\rceil+1$ which is at most one less than the bound (\ref{singletonbound}). In \cite{Barg2017,Tamo2016}, the authors generalized this idea to cyclic codes and algebraic geometry codes.

In order to consider multiple erasures in local repair, two different models for recovering the erasures in parallel were put forward independently by Wang et al. \cite{Wang1411} and Prakash et al. \cite{11}. Also, another model which was called a sequential approach was proposed in \cite{Prakash1406}. There are lots of other works devoted to the locality in the handling of multiple node failures, such as \cite{Balaji16071,8,Song2017,Tamo1406,Tamo1408,Wang16isit}.

For the convenience of computer hardware implementation, codes over small (especially binary) alphabets are of particular interest. In the binary case, the bound (\ref{singletonbound}) is not tight in almost all cases. In fact, Hao et al. \cite{Hao2016} prove that there are only 4 classes of $d$-optimal binary LRCs meeting the Singleton-like bound in the sense of equivalence of linear codes. A new bound taking field size into consideration is derived in \cite{CMbound}, which is called Cadambe-Mazumdar (C-M) bound,
\begin{equation}\label{alphabound}
k\leq \min_{t\in \mathbb{Z}^{+}}\big[tr+k_{\text{opt}}^{(q)}(n-(r+1)t,d)\big],
\end{equation}
where $k_{\text{opt}}^{(q)}(n,d)$ is the largest possible dimension of a code, for given field size $q$, code length $n$, and minimum distance $d$.

The C-M bound can be attained by binary simplex codes \cite{CMbound}. Later in \cite{Silberstein2015}, new binary LRCs constructed via anticodes are presented to attain the bound (\ref{alphabound}) with locality $r=2,3$. Based on different base codes (systematic binary codes), Huang et al. \cite{Huang2016} propose constructions of binary LRCs with $d=3,4$ and $5$ using phantom parity check symbols, some of which are shown to be optimal with respect to their minimum distance. Binary cyclic LRCs with optimal dimension for distances 2, 6 and 10 and locality 2, are presented in \cite{Goparaju2014}. Also, a class of binary LRCs with optimal dimension for parameters $r=2$ and $d=10$, are constructed in \cite{Zeh2015}. Besides, a class of binary LRCs with minimum distance 4 can be seen in \cite{Shahabinejad2016}. Also in \cite{Wang2017}, the authors construct a class of optimal binary LRCs with minimum distance at least 6. Another construction of binary LRCs with minimum distance at least 6 is given in \cite{Nam2017}, showing that some examples are optimal with respect to bound (\ref{alphabound}). In \cite{Fu2017}, the authors investigate the locality of MacDonald codes and generalized MacDonald codes, and propose some constructions of binary LRCs.

Most of the optimal constructions of LRCs above adopt a special structure, which is called disjoint repair groups (Definition \ref{DRG}). In this paper, we are concerned with the bound and constructions of binary LRCs with disjoint repair groups. We first derive an explicit bound for the dimension $k$ of such codes (Theorem \ref{theorembound}), which can be served as a generalization of the bound given in \cite{Wang2017}. We say a binary LRC is $k$-optimal if it satisfies the bound in Theorem \ref{theorembound} with equality for given $n$, $d$ and $r$. We also give two classes of $k$-optimal binary LRCs for general $r$. And our first class of $k$-optimal binary linear LRCs contains the construction given in \cite{Wang2017}. Moveover, for locality $r\in \{2,3\}$ and minimum distance $d=6$, we obtain $k$-optimal binary linear LRCs with disjoint repair groups for almost all parameters.

An outline of this paper is as follows. The next section is about some notations, definitions and some known results on partial spread. Section \ref{secup} is devoted to give an upper bound on the dimension of $[n,k,d;r]_2$-LRCs. In Section \ref{secgeneral}, we give some optimal constructions for $d=6$. Some discussions for binary LRCs with $d\geq8$ and concluding remarks are presented in the final section.

\section{Preliminaries}\label{secpre}

The following notations are fixed throughout this paper

\begin{itemize}
  \item Let $[n]=\{1,2,\cdots,n\}$ and $[a,b]=\{a,a+1,\cdots,b\}$ if $a\leq b$ are two integers.
  \item Let $q$ be a prime power, and $\mathbb{F}_q$ be the finite field with $q$ elements.
  \item Let $\mathbb{F}_{q}^{n}$ denote the vector space of dimension $n$ over $\mathbb{F}_q$.
  \item For any vector $\tb{v}=(v_1,\cdots,v_n)\in \mathbb{F}_{q}^{n}$, let ${\rm supp}(\tb{v})=\{i\in [n] | v_i\neq 0\}$ and ${\rm wt}(\tb{v})=|{\rm supp}(v)|.$ For a set $S\subseteq[n],$ define $\tb{v}|_S=(v_{i_1},\ldots,v_{i_{|S|}})$ where $i_j\in S$ for $1\le j\le |S|$ and $1\le i_1<\cdots<i_{|S|}\le n$.
  \item $d(\tb{u},\tb{v})$ stands for Hamming distance between any two vectors $\tb{u},\tb{v} \in \mathbb{F}_{q}^{n}$.
  \item Let $U\subseteq\mathbb{F}_{q}^{n}$ be a set and $\tb{v}$ be a vector. Define $d(U,\tb{v})=\min\{d(\tb{u},\tb{v})| \tb{u}\in U\}.$
  \item Let $I_n$ be the $n\times n$ identity matrix. And let $\tb{1}_n$ and $\tb{0}_{n}$ be the all-one and all-zero vectors, respectively.
\end{itemize}

Firstly, we give the concept of weakly independent set which has appeared in \cite{Gopalan2014}.

\begin{definition}
A set $T\subseteq \mathbb{F}$ is $\tau$-wise weakly independent over $\mathbb{F}_2\subseteq \mathbb{F}$ if no set $T'\subseteq T$ where $2\leq |T'|\leq \tau$ has the sum of its elements equal to zero.
\end{definition}

Next, we briefly introduce partial spread.

\begin{definition}
A partial $t$-spread of $\mathbb{F}_{q}^{m}$ is a collection $S = \{W_1, \cdots, W_l\}$ of $t$-dimensional subspaces of $\mathbb{F}_{q}^{m}$ such that $W_i \cap W_j = \{0\}$ for $1 \leq i < j \leq l$. The number $l$ is called the size of $S$. Moreover, we call $S$ maximal if it has the largest possible size. Particularly, if $\cup_{i=1}^{l}W_i = \mathbb{F}_{q}^{m}$, we simply say that $S$ is a $t$-spread.
\end{definition}

It is well-known that a $t$-spread of $\mathbb{F}_{q}^{m}$ exists if and only if $t$ divides $n$. Bounds for the sizes of maximum partial $t$-spreads were heavily studied in the past, not only because of its own interest, but also because it has a great connection to a special kind of $q$ subspace codes in (network) coding theory. Here the codewords are nonzero subspaces of $\mathbb{F}_{q}^{m}$. The most widely used distance measure for subspace codes is the so-called subspace distance $d_S(U,V):=\dim(U+V)-\dim(U\cap V)$. For $k \in [m]$, we denote by $A_q(m,k,d)$ the maximal cardinality of subspace codes over $\mathbb{F}_{q}^{m}$ with minimum subspace distance $d$, where we additionally assume that the dimension of the codewords are all $k$. With this notation, the size of a maximum partial $k$-spread in $\mathbb{F}_{q}^{m}$ is given by $A_q(m,k,2k)$.

Only few results about $A_q(m,k,2k)$ are known. Here we list some of them which will be used in this paper.

\begin{lemma}\cite{Etzion2011}\label{lemspread1}
For positive integers $1\leq k \leq m$ with $m\equiv r \pmod k$, we have$$A_q(m,k,2k)\geq \frac{q^m-q^{k}(q^r-1)-1}{q^k-1}.$$
\end{lemma}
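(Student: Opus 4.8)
The plan is to prove this lower bound by explicitly constructing a partial $k$-spread of $\mathbb{F}_q^m$ of the stated size. Write $m = ak + r$ with $0 \le r \le k-1$. If $r = 0$ then $k \mid m$ and a full $k$-spread of size $\frac{q^m-1}{q^k-1}$ exists (as recorded just above the statement), which is exactly the right-hand side when $r=0$; so I may assume $1 \le r \le k-1$ and $a \ge 1$. The backbone of the argument is the recursion
\[
A_q(m,k,2k) \ge q^{m-k} + A_q(m-k,k,2k),
\]
valid whenever $m - k \ge k$, together with the base case $A_q(k+r,k,2k) \ge 1$. Since $m-k \equiv r \pmod k$, iterating the recursion from $m = ak+r$ down to $k+r$ keeps $r$ fixed, and a short geometric-series computation shows the accumulated size equals $1 + q^r\frac{q^{ak}-q^k}{q^k-1} = \frac{q^m - q^k(q^r-1)-1}{q^k-1}$, matching the claim.

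The heart of the proof is the recursion, which I would establish by a lifting construction. Split coordinates as $\mathbb{F}_q^m = \mathbb{F}_q^k \oplus \mathbb{F}_q^{m-k}$. For the ``top'' contribution I would take a rank-metric code $\mathcal{C} \subseteq \mathbb{F}_q^{k \times (m-k)}$ with minimum rank distance $k$; when $m-k \ge k$ a Gabidulin (MRD) code supplies such a $\mathcal{C}$ of size $q^{m-k}$. Lifting each $M \in \mathcal{C}$ to the row space $U_M := \{(u, uM) : u \in \mathbb{F}_q^k\}$ produces $q^{m-k}$ subspaces of dimension $k$. For $M \ne M'$, a common nonzero vector would force $u(M-M') = 0$ with $u \ne 0$, contradicting $\mathrm{rank}(M-M') = k$; hence the $U_M$ are pairwise disjoint. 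Moreover every nonzero vector of $U_M$ has nonzero first block, so each $U_M$ meets $\{0\}^k \oplus \mathbb{F}_q^{m-k}$ only in $0$.

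Finally I would place, inside the complementary block $\{0\}^k \oplus \mathbb{F}_q^{m-k} \cong \mathbb{F}_q^{m-k}$, a partial $k$-spread of size $A_q(m-k,k,2k)$ furnished by the induction hypothesis; by the previous observation these are automatically disjoint from every $U_M$, so the union is a partial $k$-spread of the required size, which proves the recursion. The base case $a=1$ is forced, since two $k$-subspaces of $\mathbb{F}_q^{k+r}$ intersect in dimension at least $k-r > 0$, leaving room for only a single subspace, exactly the value $1$. I expect the main obstacle to be the bookkeeping at the boundary: verifying that the lifting step is legitimate only while $m-k \ge k$, so that an MRD code of full size $q^{m-k}$ with full-rank differences exists, and that the recursion must therefore terminate at the level $m = k+r$ rather than continuing; the disjointness assertions themselves are routine once the rank-distance property of $\mathcal{C}$ is in hand.
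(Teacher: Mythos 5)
Your proof is correct: the recursion $A_q(m,k,2k)\ge q^{m-k}+A_q(m-k,k,2k)$ via lifted MRD (Gabidulin) codes, the base case $A_q(k+r,k,2k)\ge 1$, and the geometric-series bookkeeping all check out and yield exactly $\frac{q^m-q^k(q^r-1)-1}{q^k-1}$. The paper itself gives no proof (the lemma is quoted from \cite{Etzion2011}), and your argument is essentially the construction used in that cited source, so there is nothing further to reconcile.
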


\begin{lemma}\cite{beutelspacher1975}
For positive integers $1\leq k \leq m$ with $m\equiv r \pmod k$, we have$$A_q(m,k,2k)=\left\{\begin{array}{lrc} \frac{q^m-1}{q^k-1} & \text{if } r=0, \\[1mm] \frac{q^m-q^{k+1}+q^k-1}{q^k-1} & \text{if } r=1. \end{array}\right.$$
\end{lemma}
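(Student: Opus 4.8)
The plan is to split the statement into its two cases, since the branch $r=0$ is elementary and the branch $r=1$ (Beutelspacher's theorem) carries all of the difficulty. Throughout I would pass to projective language, writing $\mathrm{PG}(m-1,q)$ for the geometry of $\mathbb{F}_q^m$, writing $\theta_j=\frac{q^{j+1}-1}{q-1}$ for the number of points of a projective $j$-flat (with $\theta_{-1}=0$), and identifying each subspace $W_i$ of a partial $k$-spread with a projective $(k-1)$-flat carrying $\theta_{k-1}$ points. Since distinct $W_i$ meet only in the origin, their point sets are pairwise disjoint; calling the uncovered points the \emph{holes} $H$, we always have $l\,\theta_{k-1}+|H|=\theta_{m-1}$.

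\textbf{Case $r=0$.} Here the upper bound is immediate: $l\,\theta_{k-1}\le\theta_{m-1}$ forces $l\le\theta_{m-1}/\theta_{k-1}=\frac{q^m-1}{q^k-1}$. Equality is achieved by a genuine $k$-spread, which exists precisely because $k\mid m$ (the fact quoted just before the lemma) and which covers every point, so $|H|=0$. This settles the first branch.

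\textbf{Case $r=1$, the construction.} Write $m=ak+1$ and induct on $a$. For $a=1$ the claim is that $\mathbb{F}_q^{k+1}$ admits a single spread element, which holds since any two $k$-subspaces of $\mathbb{F}_q^{k+1}$ meet in dimension at least $2k-(k+1)=k-1\ge 1$. For the step, realize $\mathbb{F}_q^m=V_1\oplus C$ with $\dim V_1=(a-1)k+1$ and $\dim C=k$, place in $V_1$ a maximal partial spread of the inductive size $1+\sum_{i=1}^{a-2}q^{ik+1}$, and fill the complement of $V_1$ with further $k$-subspaces, each a complement of $V_1$. Writing such a complement as the graph $\{(c,\phi(c)):c\in C\}$ of a linear map $\phi\colon C\to V_1$, two of them meet trivially exactly when $\phi-\phi'$ is injective, and they partition the points outside $V_1$ exactly when $\{\phi\}$ forms a spread set, i.e. $\phi\mapsto\phi(c)$ is a bijection onto $V_1$ for every $c\neq 0$. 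Identifying $V_1\cong\mathbb{F}_{q^{(a-1)k+1}}$ and fixing an $\mathbb{F}_q$-linear injection $C\hookrightarrow V_1$, the multiplication maps $\phi_\alpha\colon c\mapsto\alpha c$ $(\alpha\in\mathbb{F}_{q^{(a-1)k+1}})$ give exactly such a family, of size $q^{(a-1)k+1}$, because $\phi_{\alpha-\beta}$ has trivial kernel for $\alpha\neq\beta$. Each new subspace is disjoint from $V_1$ (every nonzero vector has nonzero $C$-component), hence disjoint from the spread elements inside $V_1$; a point count $\theta_{m-1}-\theta_{(a-1)k}=q^{(a-1)k+1}\theta_{k-1}$ confirms these cover precisely the region outside $V_1$. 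The total is $1+\sum_{i=1}^{a-1}q^{ik+1}=\frac{q^m-q^{k+1}+q^k-1}{q^k-1}$.

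\textbf{Case $r=1$, the upper bound (the crux).} I would bound $l$ by double counting over the $\theta_{m-1}$ hyperplanes $\pi$ of $\mathrm{PG}(m-1,q)$. Each $W_i$ is either contained in $\pi$ or meets it in a $(k-2)$-flat; letting $x_\pi$ count those contained and using $\theta_{k-1}-\theta_{k-2}=q^{k-1}$, the number of holes inside $\pi$ is
\[ h_\pi=\theta_{m-2}-l\,\theta_{k-2}-q^{k-1}x_\pi\ge 0. \]
Two further double counts — a $(k-1)$-flat lies in $\theta_{m-k-1}$ hyperplanes, and two disjoint ones span a $2k$-dimensional space lying in $\theta_{m-2k-1}$ hyperplanes — give
\[ \sum_\pi x_\pi=l\,\theta_{m-k-1}, \qquad \sum_\pi x_\pi^2=l\,\theta_{m-k-1}+l(l-1)\,\theta_{m-2k-1}. \]
Nonnegativity of the variance of $x_\pi$ yields a quadratic inequality in $l$, which I would then sharpen using the integrality and nonnegativity of each $h_\pi$ above. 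The main obstacle lies exactly here: the Cauchy--Schwarz/variance estimate alone only produces the generic Beutelspacher bound, whose additive constant is $q^r=q$ rather than the sharp $1$, and squeezing out the exact maximum for $r=1$ forces one to exploit that the per-hyperplane hole counts $h_\pi$ are nonnegative integers of the rigid affine-in-$x_\pi$ form displayed above. This integrality refinement is where all the arithmetic subtlety resides; by contrast the construction and the $r=0$ case are routine once the spread-set viewpoint is in place.
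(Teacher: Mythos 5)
Your case $r=0$ and your construction for $r=1$ are sound. The point count settles $r=0$; and in the inductive step your spread-set family $\{\phi_\alpha\}$ is correct (for $\alpha\neq\beta$ the map $\phi_{\alpha-\beta}$ has trivial kernel, and the count $\theta_{m-1}-\theta_{(a-1)k}=q^{(a-1)k+1}\theta_{k-1}$, in your notation $\theta_j=(q^{j+1}-1)/(q-1)$, confirms the graphs tile the complement of $V_1$), giving exactly $1+\sum_{i=1}^{a-1}q^{ik+1}=\frac{q^m-q^{k+1}+q^k-1}{q^k-1}$ subspaces. Note that the paper itself offers no proof to compare against: the lemma is quoted from Beutelspacher's 1975 paper, so your argument has to stand on its own.

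It does not, because the upper bound in the $r=1$ case --- which is the entire substance of Beutelspacher's theorem --- is never actually proved. Your three identities (the affine form of $h_\pi$, $\sum_\pi x_\pi=l\,\theta_{m-k-1}$, and $\sum_\pi x_\pi^2=l\,\theta_{m-k-1}+l(l-1)\,\theta_{m-2k-1}$) are all correct, but, as you yourself concede, the variance inequality built from them only yields a Drake--Freeman-type bound strictly weaker than the claimed maximum, and ``sharpen using the integrality and nonnegativity of each $h_\pi$'' is a promissory note, not an argument. Concretely, what is missing is the following chain: (i) the bound $l\le\frac{q^m-q^{k+1}+q^k-1}{q^k-1}$ is equivalent to the statement that the hole set $H$ has at least $q^k$ points; (ii) from your own expression $h_\pi=\theta_{m-2}-l\,\theta_{k-2}-q^{k-1}x_\pi$ one can extract the congruence $h_\pi\equiv|H|\pmod{q^{k-1}}$ for every hyperplane $\pi$ (the hole set is $q^{k-1}$-divisible), which is the precise form of ``integrality'' that matters --- bare nonnegativity and integrality of the $h_\pi$ do not suffice; and (iii) one must then prove that a nonempty $q^{k-1}$-divisible point set of size $\equiv 1\pmod{\theta_{k-1}}$ has at least $q^k$ points, which requires a further averaging/induction over hyperplane sections (this is the core of Beutelspacher's proof and of its modern divisible-code reformulations). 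None of steps (i)--(iii) appears in your proposal; identifying where the difficulty lives is not the same as resolving it, so the proof of the lemma's second branch is incomplete.
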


\begin{corollary}
$$A_2(m,2,4)=\left\{\begin{array}{lrc} \frac{2^m-1}{3} & \text{if } m\equiv 0 \pmod 2,\\[1mm] \frac{2^m-5}{3} & \text{if } m\equiv 1 \pmod 2. \end{array}\right. $$
\end{corollary}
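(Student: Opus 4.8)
The plan is to obtain this as a direct specialization of the preceding lemma of Beutelspacher \cite{beutelspacher1975}, setting $q=2$ and $k=2$. The key observation that makes the corollary a complete evaluation (rather than merely a bound) is that the modulus in that lemma is $k=2$: writing $m\equiv r\pmod{2}$ forces $r\in\{0,1\}$, and these are exactly the two residue classes for which the lemma gives an exact value. Since every integer $m$ is either even ($r=0$) or odd ($r=1$), the two cases exhaust all possibilities, so no third, unresolved case can arise. Thus I would first invoke the lemma for these parameters and then simply record the two arithmetic simplifications.

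For the even case $m\equiv 0\pmod 2$ (so $r=0$), the lemma yields
\[
A_2(m,2,4)=\frac{q^m-1}{q^k-1}\Big|_{q=2,\,k=2}=\frac{2^m-1}{2^2-1}=\frac{2^m-1}{3}.
\]
For the odd case $m\equiv 1\pmod 2$ (so $r=1$), the lemma yields
\[
A_2(m,2,4)=\frac{q^m-q^{k+1}+q^k-1}{q^k-1}\Big|_{q=2,\,k=2}
=\frac{2^m-2^{3}+2^{2}-1}{2^{2}-1}=\frac{2^m-5}{3},
\]
since $-2^3+2^2-1=-8+4-1=-5$. There is essentially no obstacle here: the entire content is the verification that the two residues $r=0,1$ modulo $k=2$ cover all $m$, together with the two routine substitutions above. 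Hence the corollary follows immediately.
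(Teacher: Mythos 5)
Your proposal is correct and matches the paper's (implicit) proof exactly: the corollary is stated immediately after Beutelspacher's lemma precisely as its specialization to $q=2$, $k=2$, where the residues $r=0,1$ modulo $2$ exhaust all $m$. Your arithmetic in both cases, including $2^m-2^3+2^2-1=2^m-5$, is right, so there is nothing to add.
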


\begin{lemma}\cite{El2010}
$$A_2(m,3,6)=\left\{\begin{array}{lrc} \frac{2^m-1}{7} & \text{if } m\equiv 0 \pmod 3,\\[1mm]
\frac{2^m-9}{7} & \text{if } m\equiv 1 \pmod 3,\\[1mm]
\frac{2^m-18}{7} & \text{if } m\equiv 2 \pmod 3.
\end{array}\right. $$
\end{lemma}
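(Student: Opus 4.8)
The plan is to reinterpret $A_2(m,3,6)$ as the maximum size of a partial $3$-spread of $\mathbb{F}_2^m$. Indeed, for two $3$-dimensional subspaces $U,V$ one has $d_S(U,V)=6-2\dim(U\cap V)$, so a constant-dimension code with all codewords $3$-dimensional has minimum subspace distance $6$ exactly when the codewords pairwise intersect trivially. Writing $l=A_2(m,3,6)$ and letting $\mathcal{H}$ denote the set of \emph{holes} (the nonzero vectors covered by no codeword), the $7l$ nonzero vectors lying in the codewords are distinct, whence $|\mathcal{H}|=2^m-1-7l$ and the whole problem becomes: minimize the number of holes. This immediately gives the upper bound $l\le(2^m-1)/7$, which together with the existence of a $3$-spread when $3\mid m$ settles the case $m\equiv 0\pmod 3$ (zero holes); the lower bounds for $m\equiv 0,1$ are furnished directly by Lemma~\ref{lemspread1} (which leaves $8$ holes when $m\equiv 1$).

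The crux for $m\equiv 1,2$ is that the naive count is not tight, and here I would extract a divisibility constraint on $\mathcal{H}$. Fix a hyperplane $\pi$ and let $s_\pi$ be the number of codewords contained in $\pi$; counting the nonzero points of $\pi$ according to whether they lie in a codeword inside $\pi$ (contributing $7$ each), in a codeword meeting $\pi$ in a plane (contributing $3$ each), or in $\mathcal{H}$, yields
\begin{equation*}
|\mathcal{H}\cap\pi|=2^{m-1}-1-3l-4s_\pi .
\end{equation*}
Subtracting this from $|\mathcal{H}|=2^m-1-7l$ gives $|\mathcal{H}|-|\mathcal{H}\cap\pi|=2^{m-1}-4(l-s_\pi)\equiv 0\pmod 4$ for $m\ge 3$, so every hyperplane meets $\mathcal{H}$ in $|\mathcal{H}|\bmod 4$ points; that is, $\mathcal{H}$ is a $4$-divisible projective set. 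Since $|\mathcal{H}|\equiv 2^m-1\pmod 7$, the only admissible hole-numbers below the target are $1$ (when $m\equiv 1$) and $3,10$ (when $m\equiv 2$), which I would rule out one at a time. The values $1$ and $3$ die for free: if $|\mathcal{H}|\in\{1,3\}$, then $|\mathcal{H}\cap\pi|\equiv|\mathcal{H}|\pmod 4$ forces $|\mathcal{H}\cap\pi|=|\mathcal{H}|$ for every hyperplane, i.e.\ every hyperplane contains all the holes—impossible for a nonzero point.

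The remaining and genuinely harder exclusion is $|\mathcal{H}|=10$. I would pass to the subspace $V'$ spanned by the $10$ holes and form the $[10,s]$ projective code $C$ whose generator matrix has the holes as columns; its nonzero weights equal $10-|\mathcal{H}\cap\pi|$, so $4$-divisibility makes $C$ doubly even, hence self-orthogonal, which gives $4\le s\le 5$ (one needs $2^s-1\ge 10$). A first- and second-moment computation on the weight enumerator then pins down the weight distribution from the length and the first moment, after which the second moment predicted by that distribution contradicts the one forced by the column structure (distinct nonzero, hence pairwise independent, columns) for each of $s=4$ and $s=5$. This yields $|\mathcal{H}|\ge 8$ for $m\equiv 1$ and $|\mathcal{H}|\ge 17$ for $m\equiv 2$, i.e.\ the claimed upper bounds.

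It remains to match the bound for $m\equiv 2$ by a construction with exactly $17$ holes, since Lemma~\ref{lemspread1} supplies only $25$. For $m$ large enough, I would build this by taking a $3$-spread on a subspace of dimension $3\lf m/3\rf$ and then covering the remaining cosets by additional pairwise-disjoint $3$-spaces chosen to overlap the hole region as tightly as possible—equivalently, by establishing a hole-preserving lifting from $\mathbb{F}_2^m$ to $\mathbb{F}_2^{m+3}$ starting from an explicit base configuration in $\mathbb{F}_2^8$ that realizes $17$ holes. I expect this construction—trimming the generic $25$ holes down to the extremal $17$—to be the main obstacle, on par with the moment argument excluding $|\mathcal{H}|=10$; by contrast the $m\equiv 0,1$ cases fall out at once from point-counting and Lemma~\ref{lemspread1}.
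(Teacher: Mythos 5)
This lemma is nowhere proved in the paper itself---it is imported verbatim from the cited reference---so your attempt can only be judged against the literature it points to. Your upper-bound half is correct, and it is in fact the modern ``divisible-codes'' route rather than the original argument: the hyperplane count $|\mathcal{H}\cap\pi|=2^{m-1}-1-3l-4s_\pi$ does show that the number of holes outside every hyperplane is divisible by $4$; sizes $1$ and $3$ are killed exactly as you say; and for size $10$ your moment plan closes correctly. Explicitly, with all nonzero weights in $\{4,8\}$, the first moment gives $4A_4+8A_8=10\cdot 2^{s-1}$, the second moment (using that distinct nonzero binary columns are pairwise linearly independent) gives $16A_4+64A_8=10\cdot 2^{s-1}+90\cdot 2^{s-2}$, and solving yields $A_8=15\cdot 2^{s-6}$, which is non-integral for the only admissible dimensions $s\in\{4,5\}$ (here $s\ge 4$ from $2^s-1\ge 10$ and $s\le 5$ from doubly-even $\Rightarrow$ self-orthogonal). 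So $|\mathcal{H}|\ge 8$ for $m\equiv 1$ and $|\mathcal{H}|\ge 17$ for $m\equiv 2$ are genuinely established, and the cases $m\equiv 0,1$ are complete once Lemma~\ref{lemspread1} supplies the matching constructions.

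The genuine gap is the lower bound for $m\equiv 2\pmod 3$, and it is not a routine omission. First, your primary formulation cannot work as stated: a $3$-dimensional subspace always meets a subspace of codimension $2$ in $\mathbb{F}_2^m$ nontrivially, so one cannot ``cover the remaining cosets'' of a $3\lfloor m/3\rfloor$-dimensional subspace by $3$-spaces disjoint from it; only your alternative formulation---lifting three dimensions at a time from a base example---is viable. The lifting step itself is classical (the complement of an $m$-dimensional subspace $W$ in $\mathbb{F}_2^{m+3}$ is partitioned by $2^m$ three-dimensional subspaces meeting $W$ and one another trivially), but the base case, an explicit partial $3$-spread of $\mathbb{F}_2^8$ of size $34$, i.e.\ with exactly $17$ holes, is precisely the hard constructive core of the cited result, and you assert its existence without producing it; nothing available in the paper helps, since Lemma~\ref{lemspread1} only gets down to $24$ holes (not $25$ as you wrote). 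Until that configuration is exhibited, your argument proves only the upper bound in the $m\equiv 2$ case. A final point worth recording: the claimed value forces $m\ge 8$ in that residue class---the lemma as printed actually fails at $m=5$, where $A_2(5,3,6)=1\neq 2$ because two $3$-spaces in dimension $5$ must intersect---so your ``for $m$ large enough'' is not mere caution but a necessary restriction that the statement itself omits.
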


Now, we give the definition of linear LRCs discussed in this paper.
\begin{definition}
The $i$-th coordinate, $1\leq i\leq n$, of an $[n,k,d]_q$ linear code $\mathcal{C}$ is said to have $r$-locality if it can be recovered by accessing at most $r$ other code symbols. Equivalently, there exists a codeword $\tb{h}_i$ in the dual code $\mathcal{C}^{\bot}$ such that $i\in {\rm supp}(\tb{h}_i)$ and $\mid{\rm supp}(\tb h_i)\mid\leq r+1.$
\end{definition}

\begin{definition}
An $[n,k,d]_q$ linear code $\mathcal{C}$ is said to have $r$-locality if each of the $i$-th coordinate has $r$-locality, where $1\leq i\leq n.$ We denote it by $[n,k,d;r]_q$.
\end{definition}

\begin{definition}\label{DRG}
We say an $[n,k,d;r]_2$ LRC has disjoint repair groups if there exists a set of local parity checks $\tb h_1, \tb h_2, \cdots, \tb h_l \in \mathcal{C}^{\bot}$ such that $\cup_{i=1}^{l}{\rm supp}(\tb h_i)=[n]$, ${\rm wt}(\tb h_i)=r+1$ and ${\rm supp}(\tb h_i)\cap{\rm supp}(\tb h_j)=\varnothing$ for $1\leq i\neq j\leq l.$
\end{definition}

\section{Upper bound for binary LRCs with disjoint repair groups}\label{secup}

In this section, we will give an upper bound for binary LRCs with disjoint repair groups based on Johnson bound. Let $\mathcal{C}$ be an $[n,k,d;r]_2$ LRC with disjoint repair groups. We assume that the parity check matrix $\mathbf{H}$ of $\mathcal{C}$ consists of two parts
\begin{equation}\label{parityh}
\mathbf{H}=\left(\begin{array}{lrc}\mathbf{H}_L\\\mathbf{H}_G\end{array}\right).
\end{equation}
The matrix $\mathbf{H}_L$ is designed to guarantee the locality of the code $\mathcal{C}$, and the matrix $\mathbf{H}_G$ determines the minimum distance of $\mathcal{C}$. Assume $r+1$ divides $n$ since we only consider the code with disjoint repair groups in this paper. \Wolg , we have
$$\mathbf{H}_L=I_{\frac{n}{r+1}} \bigotimes \tb 1_{r+1}.$$
Note that the sum of the rows of $\mathbf{H}_L$ is an all-one vector, it follows that the minimum distance of $\mathcal{C}$ must be even.

\begin{theorem}\label{theorembound}
For any $[n,k,d;r]_2$ binary linear LRC with disjoint repair groups, let $d=2t+2$ and $n=(r+1)l.$
\begin{enumerate}
  \item If $t+1$ is an odd integer, we have
  \begin{equation}\label{boundodd}
  k\leq \frac{rn}{r+1}-\Big\lceil \log_2 (\sum_{0\leq i_1+\cdots+i_l\leq \lfloor\frac{d-1}{4}\rfloor} \prod_{j=1}^{l} \binom{r+1}{2 i_j}) \Big\rceil.
  \end{equation}
  \item If $t+1$ is an even integer, we have
  \begin{equation}\label{boundeven}
  k\leq \frac{rn}{r+1}-\Big\lceil \log_2 (\sum_{0\leq i_1+\cdots+i_l\leq \lfloor\frac{d-1}{4}\rfloor} \prod_{j=1}^{l} \binom{r+1}{2 i_j}+\frac{\sum\limits_{i_1+\cdots+i_l=\frac{d}{4}} \prod\limits_{j=1}^{l}\binom{r+1}{2 i_j}}{\lfloor\frac{n}{t+1}\rfloor}) \Big\rceil.
  \end{equation}
\end{enumerate}
\end{theorem}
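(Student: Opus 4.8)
The plan is to reduce both inequalities to a single counting statement about the map induced by the global parity checks, and then run a packing/Johnson-type argument that is sensitive to the even-weight structure forced by $\mathbf{H}_L$. Write $V=\ker \mathbf{H}_L$. Since $\mathbf{H}_L=I_{n/(r+1)}\otimes \tb 1_{r+1}$, every $\mathbf v\in V$ has even weight inside each of the $l=\frac{n}{r+1}$ repair groups; hence $\dim V=\frac{rn}{r+1}$, and the number of vectors of $V$ whose group-wise supports have sizes $2i_1,\dots,2i_l$ is exactly $\prod_{j=1}^{l}\binom{r+1}{2i_j}$. As $\mathcal C=V\cap\ker\mathbf H_G\subseteq V$, the global checks define a linear map $\phi:V\to\mathbb F_2^{\,n-k-l}$, $\phi(\mathbf v)=\mathbf H_G\mathbf v^{\top}$, with $\ker\phi=\mathcal C$, so $|\mathrm{Im}\,\phi|=2^{\dim V-k}=2^{\,rn/(r+1)-k}$. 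Setting $m=\frac{rn}{r+1}-k$, both claimed bounds are equivalent to lower bounds on $2^{m}=|\mathrm{Im}\,\phi|$, so the whole proof becomes: \emph{exhibit enough distinct images}. The collision principle driving everything is that if $\mathbf u\neq\mathbf v$ in $V$ share an image, then $\mathbf u+\mathbf v$ is a nonzero codeword and therefore $\mathrm{wt}(\mathbf u+\mathbf v)\ge d$.

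For the odd case ($t+1$ odd, i.e. $t=2s$ and $d=4s+2$, so $\lfloor\frac{d-1}{4}\rfloor=s$), I would take $\mathcal B=\{\mathbf v\in V:\mathrm{wt}(\mathbf v)\le 2s\}$. Two distinct elements of $\mathcal B$ sum to a vector of weight at most $4s=d-2<d$, so by the collision principle $\phi$ is injective on $\mathcal B$. Since $|\mathcal B|$ is precisely $\sum_{0\le i_1+\cdots+i_l\le s}\prod_j\binom{r+1}{2i_j}$, injectivity yields $2^{m}\ge|\mathcal B|$, which is (\ref{boundodd}) after applying $\lceil\log_2\rceil$.

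The even case ($t+1$ even, i.e. $t=2s+1$ and $d=4s+4$, again $\lfloor\frac{d-1}{4}\rfloor=s$ but now $\frac d4=s+1$) is where the real work lies, and I expect the Johnson-type refinement to be the main obstacle. Here I would split the relevant vectors of $V$ into $\mathcal B_0=\{\mathrm{wt}\le 2s\}$ and $\mathcal B_1=\{\mathrm{wt}=2s+2\}$. As before $\phi$ is injective on $\mathcal B_0$; moreover any $\mathbf u\in\mathcal B_0,\ \mathbf v\in\mathcal B_1$ satisfy $\mathrm{wt}(\mathbf u+\mathbf v)\le 4s+2<d$, so $\phi(\mathcal B_0)$ and $\phi(\mathcal B_1)$ are disjoint. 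The key new point is to bound the fibers of $\phi|_{\mathcal B_1}$: any two vectors in one fiber have weight $2s+2=\frac d2$ and their sum is a codeword of weight $\le d$, hence exactly $d$, which forces their supports to be \emph{disjoint}. Thus a fiber is a family of pairwise disjoint $\frac d2$-subsets of $[n]$, so it has at most $\lfloor \frac{n}{d/2}\rfloor=\lfloor\frac{n}{t+1}\rfloor$ members. Consequently $|\phi(\mathcal B_1)|\ge |\mathcal B_1|/\lfloor\frac n{t+1}\rfloor$, and adding the two disjoint image counts gives $2^{m}\ge |\mathcal B_0|+|\mathcal B_1|/\lfloor\frac n{t+1}\rfloor$, which is exactly (\ref{boundeven}).

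The one subtlety to pin down is that the disjoint-support conclusion must hold for every pair in a fiber simultaneously, so that the entire fiber (not merely a single pair) embeds into disjoint blocks of size $\frac d2$. This follows because all members of a fiber lie in one coset of $\mathcal C$, so every pairwise sum is a codeword of weight $\le d$ hence $=d$, and the identity $\mathrm{wt}(\mathbf u+\mathbf v)=\mathrm{wt}(\mathbf u)+\mathrm{wt}(\mathbf v)-2|\mathrm{supp}(\mathbf u)\cap\mathrm{supp}(\mathbf v)|$ then forces each pairwise intersection to be empty. Verifying this, together with checking that $|\mathcal B_0|$ and $|\mathcal B_1|$ match the binomial sums in the statement (using that membership in $V$ is exactly the even-per-group condition), completes the argument.
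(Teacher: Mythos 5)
Your proposal is correct and is essentially the paper's own argument recast in the language of the quotient map $\phi:V\to V/\mathcal{C}$: your injectivity of $\phi$ on $\mathcal{B}_0$, the disjointness of $\phi(\mathcal{B}_0)$ and $\phi(\mathcal{B}_1)$, and the fiber bound $\lfloor n/(t+1)\rfloor$ via pairwise disjoint supports correspond exactly to the paper's sphere-packing inequality, its triangle-inequality step showing weight-$(t+1)$ translates lie at distance exactly $t+1$ from $\mathcal{C}$, and its Johnson-type double count of the pair set $\Omega$. The only cosmetic differences are that you count cosets rather than vectors (divide the paper's inequality by $|\mathcal{C}|$) and that you read off the weight distribution of $V$ directly from the even-weight-per-group condition instead of via the MacWilliams identity.
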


\begin{proof}
Let $\mathcal{C}$ be an $[n,k,d;r]_2$ binary linear LRC with disjoint repair groups, and $\{\tb h_1, \tb h_2, \cdots, \tb h_l\}$ be the set of local parity checks corresponding to the disjoint repair groups of $\mathcal{C}$. Put $L={\rm span}_{\mathbb{F}_2}\{\tb h_1, \tb h_2, \cdots, \tb h_l\}$ and $V=L^{\bot}$. Clearly, $L\subseteq \mathcal{C}^{\bot}$, $\dim(L)=l$, we arrive at $\ml{C}\subseteq V$ and $\dim(V)=n-l=\frac{rn}{r+1}.$

Noticing that $t$ is the packing radius of $\ml{C}$, the spheres of radius $t$ centered at any of the codewords are disjoint. Let $B_{V}(\tb c,t)=\{\tb v\in V | d(\tb c,\tb v)\leq t\},$ and denote $B_{V}(\tb 0,t)$ by $B_{V}(t).$ We obtain $|B_{V}(\tb c,t)|=|B_{V}(t)|$ since $\ml{C}\subseteq V.$ Let $\ml{N}$ be the vectors at distance $t+1$ from $\ml{C}$, so $\ml{N}=\{\tb x \in V | d(\ml{C},\tb x)=t+1\}.$ Clearly, we have

\begin{equation}\label{equbound}
|\ml{C}|\times|B_{V}(t)|+|\ml{N}| \leq 2^{\dim(V)}.
\end{equation}

We first compute the size of $B_{V}(t).$ Note that the linear space $L$ has weight enumerator polynomial $W_{L}(x,y)=(x^{r+1}+y^{r+1})^l.$ By the MacWilliams equality, the weight enumerator polynomial of $V$ is

$$W_{V}(x,y)=\frac{1}{|L|}W_{L}(x+y,x-y)=\sum_{0\leq u\leq \frac{n}{2}}A_{u}x^{n-2u}y^{2u},$$
where $A_u=\sum\limits_{i_1+\cdots+i_l=u} \prod\limits_{j=1}^{l}\binom{r+1}{2 i_j}.$ It follows that

\begin{equation}\label{equsph}
|B_{V}(t)|=A_0+\cdots+A_{\lf\frac{d-1}{4} \rf}=\sum_{0\leq i_1+\cdots+i_l\leq \lfloor\frac{d-1}{4}\rfloor} \prod_{j=1}^{l} \binom{r+1}{2 i_j}.
\end{equation}

The next step is to show a lower bound of $|\ml{N}|$.

When $t+1$ is odd, it's easy to get $|\ml{N}|=0$. In fact, if there exists $\tb x\in \ml{N}\subseteq V$ such that $d(\tb c,\tb x)=t+1$ for some $\tb c\in \ml{C}\subseteq V$, then we have ${\rm wt}(\tb c-\tb x)=d(\tb c,\tb x)=t+1$. This is impossible since for any $\tb v\in V$, ${\rm wt}(\tb v)$ is an even integer, which can be seen from the weight enumerator polynomial $W_{V}(x,y).$ Thus Eq.(\ref{boundodd}) can be derived directly from Eq.(\ref{equbound}) and Eq.(\ref{equsph}).

When $t+1$ is even, let $\Omega=\{(\tb c,\tb x)\in \ml{C}\times\ml{N} | d(\tb c,\tb x)=t+1\}$ and $\Omega_{c}=\{\tb x\in \ml{N} | (\tb c,\tb x)\in\Omega\}.$ Then $|\Omega|=\sum_{\tb c\in\ml{C}}|\Omega_{\tb c}|.$ Fix $\tb c\in \ml{C}.$ Let $\tb x\in V$ be a vector at distance $t+1$ from $\tb c$, i.e. ${\rm wt}(\tb c-\tb x)=t+1$. There are exactly $A_{\frac{d}{4}}$ such vectors $\tb x$.

We claim that such vectors $\tb x$ lie in $\Omega_{\tb c}.$ Because ${\rm wt}(\tb c-\tb x)=t+1,$ we know $d(\ml{C},\tb x)\leq t+1.$ Let $\tb c^{\prime}\in\ml{C}$ with $\tb c^{\prime}\neq \tb c.$ Then by the triangle inequality, we have $d\leq {\rm wt}(\tb c^{\prime}-\tb c)={\rm wt}(\tb c^{\prime}-\tb x-(\tb c-\tb x))\leq {\rm wt}(\tb c^{\prime}-\tb x)+{\rm wt}(\tb c-\tb x)=\text{wt}(\tb c^{\prime}-\tb x)+t+1,$ implying $\text{wt}(\tb c^{\prime}-\tb x)\geq t+1,$ yielding $d(\ml{C},\tb x)=t+1$ since $\tb c^{\prime}\in \ml{C}$ was chosen arbitrarily and we saw previously that $d(\ml{C},\tb x)\leq t+1.$ Therefore all such $\tb x$ lie in $\Omega_{\tb c}$ giving $|\Omega_{\tb c}|=A_{\frac{d}{4}}.$ Hence
\begin{equation}\label{equomega1}
|\Omega|=|\ml{C}|\times A_{\frac{d}{4}}.
\end{equation}

On the other hand, fix $\tb x\in\ml{N}.$ We get an upper bound for the number of $\tb c\in\ml{C}$ with $d(\tb c,\tb x)=t+1.$ Clearly, the set $\{\tb c-\tb x| \tb c\in\ml{C} \text{ with } d(\tb c,\tb x)=t+1\}$ is a binary constant weight code of length $n$ with codewords of weight $t+1$ and minimum distance $d=2t+2.$ Thus for each $\tb x\in\ml{N},$ there are at most $\lf\frac{n}{t+1}\rf$ choices for $\tb c$ with $d(\tb c,\tb x)=t+1.$ Hence
\begin{equation}\label{equomega2}
|\Omega|\leq |\ml{N}|\times \Big\lfloor\frac{n}{t+1}\Big\rfloor.
\end{equation}

Thus Eq.(\ref{boundeven}) can be derived directly from Eq.(\ref{equbound}), Eq.(\ref{equsph}), Eq.(\ref{equomega1}) and Eq.(\ref{equomega2}).
\end{proof}

\begin{remark}
Recently, Eq.(\ref{boundodd}) was derived in \cite{Wang2017}, our result can be viewed as a generalization.
\end{remark}

\begin{remark}
Our bound is tight when $d=4$. Eq.(\ref{boundeven}) degenerates to $k\leq \frac{rn}{r+1}-\lc\log_2 (1+r)\rc.$ There exists an $[n,k=\frac{rn}{r+1}-\lc\log_2 (1+r)\rc,d=4;r]_2$ LRC with disjoint repair groups, which can be found in \cite{Shahabinejad2016}.
\end{remark}

\begin{remark}
In \cite{Goparaju2014}, binary LRCs with disjoint repair groups are constructed from primitive cyclic codes for special parameters such as $r=2$, $d=2,6,10$ and $n=2^m-1$ with $m$ even. Also binary LRCs with parameters $r=2$, $d=10$ and $n=2^m+1$ with $m$ odd are constructed in \cite{Zeh2015}. Furthermore, there exists an $[n=\frac{2^s-1}{2^t-1},k\geq\frac{rn}{r+1}-s,d\geq 6;r=2^t]_2$ LRC with disjoint repair groups, which can be found in \cite{Wang2017}. All of the constructions above attain our bound in Eq.(\ref{boundodd}).
\end{remark}

At the end of this section, we will give a general construction for binary LRCs with disjoint repair groups, which is asymptotically optimal when $n$ approaches the infinity. Let $S^{\prime}=\{\beta_1, \beta_2, \cdots, \beta_n\}\subseteq \mathbb{F}_{2^{\lc \log_2 n\rc}}$, where $\beta_1, \beta_2, \cdots, \beta_n$ are distinct elements of the field. Consider $S=\{\mathbf{a}_1, \mathbf{a}_2, \cdots, \mathbf{a}_n\}\subseteq \mathbb{F}_{2^{\lc \log_2 n\rc}}^{t},$ where $\mathbf{a}_i=(\beta_{i}, \beta_{i}^{3}, \cdots, \beta_{i}^{2t-1})^{\top}$, $i\in [n]$.

\begin{construction}\label{congener}
Define a binary LRC $\ml{C}$ with parity check matrix $\mathbf{H}$ given in Eq.(\ref{parityh}). Let $\mathbf{H}_L=I_{\frac{n}{r+1}} \bigotimes \tb 1_{r+1}$ and $\mathbf{H}_G$ be a $t\lc \log_2 n\rc \times n$ matrix whose columns are binary expansions of the vectors $\{\mathbf{a}_1, \mathbf{a}_2, \cdots, \mathbf{a}_n\}.$
\end{construction}

\begin{theorem}
The code $\ml{C}$ obtained from Construction \ref{congener} is an $[n,k\geq\frac{rn}{r+1}-t\lc\log_2 n\rc,d\geq 2t+2;r]_2$ LRC with disjoint repair groups.
\end{theorem}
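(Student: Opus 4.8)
The plan is to verify the three claimed properties in turn; only the distance bound $d\ge 2t+2$ needs real work. For the locality, I would read everything off $\mathbf{H}_L=I_{\frac{n}{r+1}}\otimes\mathbf{1}_{r+1}$: its $i$-th row is a weight-$(r+1)$ vector supported on the $i$-th block of $r+1$ consecutive coordinates, these $\frac{n}{r+1}$ supports partition $[n]$, and each row lies in $\mathcal{C}^{\bot}$ by construction. This is exactly Definition \ref{DRG}, so $\mathcal{C}$ has disjoint repair groups and every coordinate has $r$-locality. For the dimension, $\mathbf{H}$ has $\frac{n}{r+1}+t\lc\log_2 n\rc$ rows, so $\operatorname{rank}(\mathbf{H})$ is at most this number and $k=n-\operatorname{rank}(\mathbf{H})\ge\frac{rn}{r+1}-t\lc\log_2 n\rc$.

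For the minimum distance I would argue in the spirit of the BCH bound. Let $\mathbf{c}\neq\mathbf{0}$ be a codeword with support $I$. The $\mathbf{H}_G$-block of $\mathbf{H}\mathbf{c}^{\top}=\mathbf{0}$ says that the binary expansions of $\{\mathbf{a}_i\}_{i\in I}$ sum to zero, which under the $\mathbb{F}_2$-linear isomorphism $\mathbb{F}_{2^{m}}\cong\mathbb{F}_2^{m}$ (with $m=\lc\log_2 n\rc$) means $\sum_{i\in I}\beta_i^{2j-1}=0$ for $j=1,\dots,t$. Because squaring is additive in characteristic $2$, we have $\big(\sum_{i\in I}\beta_i^{s}\big)^2=\sum_{i\in I}\beta_i^{2s}$, so the set of exponents $s$ with $\sum_{i\in I}\beta_i^{s}=0$ is closed under doubling. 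Since every $s\in\{1,\dots,2t\}$ has odd part at most $2t-1$, it is reached from one of $\{1,3,\dots,2t-1\}$ by repeated doubling, and hence $\sum_{i\in I}\beta_i^{s}=0$ holds for all $2t$ consecutive exponents $s=1,\dots,2t$.

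To conclude, suppose $1\le|I|\le 2t$. After discarding the at most one index with $\beta_i=0$ (which lowers $|I|$ but leaves the relations intact), the $w\times w$ matrix $\big(\beta_{i_\ell}^{s}\big)_{1\le s,\ell\le w}$, $w=|I|$, annihilates the all-ones vector, forcing its determinant $\prod_{\ell}\beta_{i_\ell}\cdot\prod_{\ell<\ell'}(\beta_{i_\ell}+\beta_{i_{\ell'}})$ to vanish --- impossible for distinct nonzero $\beta_i$. The residual case $I=\{i\}$ with $\beta_i=0$ has odd weight and is killed by the $\mathbf{H}_L$-parity, which via $\mathbf{1}_n\in\mathcal{C}^{\bot}$ forces every weight to be even (as already noted before the theorem). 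Thus no nonzero codeword has weight $\le 2t$, giving $d\ge 2t+1$, and evenness of all weights upgrades this to $d\ge 2t+2$. I expect the only delicate points to be the bookkeeping that the $t$ odd-power relations generate all $2t$ consecutive ones via Frobenius, and checking that the possibly unavoidable zero evaluation point (when $n$ is a power of two) does not spoil the Vandermonde step; the rest follows immediately from the block structure of $\mathbf{H}_L$ and a rank count.
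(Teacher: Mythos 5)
Your proof is correct and follows essentially the same route as the paper's: a rank count for the dimension, the block structure of $\mathbf{H}_L$ for locality and disjoint repair groups, the all-ones dual vector for evenness of weights, the Frobenius/doubling argument to extend the $t$ odd power-sum relations to all exponents $1,\dots,2t$, and a Vandermonde determinant to rule out weights up to $2t$. The only cosmetic difference is your handling of a possible zero evaluation point (discard it and kill the weight-one residual case by evenness), whereas the paper absorbs the $\beta_i$ into the unknown vector $x=(c_1\beta_1,\dots,c_{2t}\beta_{2t})^{\top}$ and uses exponents $0,\dots,2t-1$, so its Vandermonde determinant is nonzero even when some $\beta_i=0$.
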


\begin{proof}
As the number of rows of $\mathbf{H}$ is $\frac{n}{r+1}+t\lc \log_2 n\rc$, we have $k\geq\frac{rn}{r+1}-t\lc\log_2 n\rc$. It follows from $\mathbf{H}_{L}$ that the locality is $r$.

Now we prove the lower bound on $d$. Note that the minimum distance of $\mathcal{C}$ must be even since the sum of the rows of $\mathbf{H}_L$ is an all-one vector. Therefore it suffices to show that $d\geq 2t+1$. Suppose there exists a codeword $\tb c\in\ml{C}$ such that $\mathbf{H}\tb c^{\top}=\tb 0$ and $2\leq {\rm wt}(\tb c)\leq 2t.$ \Wolg , we may assume that $\tb c=(c_1,\cdots,c_{2t},0,\cdots,0)$\textbf, which implies that $\sum_{i=1}^{2t} c_i \tb a_i=0.$ Thus we get $\sum_{i=1}^{2t} c_i\beta_{i}^{2s-1}=0,$ for $s\in [t]$. Raising both sides to the $2^b$-th power, we obtain $\sum_{i=1}^{2t} c_i\beta_{i}^{2^{b}(2s-1)}=0,$ where $b$ is a nonnegative integer. Notice that for each nonzero integer $m$, it can be uniquely written as $m=2^{s}e$ with $e$ odd. We have already obtained $\sum_{i=1}^{2t} c_i\beta_{i}^{m}=0,$ where $m\geq 1$ is an integer. In other words, we get a nonzero solution $x=(c_1\beta_{1},c_2\beta_{2},\cdots,c_{2t}\beta_{2t})^{\top}$ to the system $\mathbf{M}x=0,$ where
$$\mathbf{M}=\left(
    \begin{array}{cccc}
      1 & 1 & \cdots & 1 \\
      \beta_1 & \beta_2 & \cdots & \beta_{2t} \\
      \vdots & \vdots & \ddots & \vdots \\
      \beta_{1}^{2t-1} & \beta_{2}^{2t-1} & \cdots & \beta_{2t}^{2t-1} \\
    \end{array}
  \right)
$$
is the Vandermonde matrix, which has full rank. We get a contradiction.
\end{proof}

\section{$k$-optimal constructions for binary LRCs with disjoint repair groups}\label{secgeneral}

In this section, we will give constructions for $k$-optimal $[n,k,d;r]_2$ LRCs with disjoint repair groups. The parity check matrix $\mathbf{H}$ can be represented as follows:
\begin{equation}\label{conhr}
\mathbf{H}=\left(\begin{array}{lrc}\mathbf{H}_L\\\mathbf{H}_G\end{array}\right)=\left(\begin{array}{cccc}
\mathbf{H}_L^1   &  \mathbf{H}_L^2   & \ldots  &  \mathbf{H}_L^{l}\\[1mm]
\mathbf{H}_G^1   &  \mathbf{H}_G^2   & \ldots  &  \mathbf{H}_G^{l}
\end{array}
\right),
\end{equation}
where $l=\frac{n}{r+1}.$ For $i\in [l],$ $\mathbf{H}_L^i$ is an $l\times(r+1)$ matrix whose $i$-th row is $\tb{1}_{r+1}$ and the other rows are all zero\textbf, $\mathbf{H}_G^i$ denotes the $i$-th $(n-k-l)\times(r+1)$ sub-matrix of $\mathbf{H}_G$.


The sub-matrix $\mathbf{H}_G=(\mathbf{H}_G^1~\mathbf{H}_G^2~\ldots~\mathbf{H}_G^{l})$ will determine the minimum distance of the code $\mathcal{C}$. It is well known that the minimum distance of a linear code is at least $d$ if and only if any $d-1$ columns of $\mathbf{H}$ are linearly independent.

\begin{lemma}\label{lemd}
Let $\mathcal{C}$ be a code that is defined by the parity check matrix $\mathbf{H}$ in Eq.(\ref{conhr}), where $l=\frac{n}{r+1},$ and $t\geq 0$ be an integer. Then $d\geq 2t+2$ holds if and only if
$$\sum_{i=1}^{l}\sum_{j=1}^{a_i}\tb c_{j}^{i}\neq \tb 0,$$
where $a_1,a_2,\cdots,a_l$ satisfy the following two conditions: (1) for $1\leq i\leq l$, $0\leq a_i\leq\min\{2t,r+1\}$ is an even integer; (2) $2\leq\sum_{i=1}^{l}a_i\leq 2t$. And $\{\tb c_{1}^{i},\tb c_{2}^{i},\cdots,\tb c_{a_i}^{i}\}$ is a collection of any $a_i$ columns from $\mathbf{H}_G^i$.
\end{lemma}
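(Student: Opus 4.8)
The plan is to translate the minimum-distance condition into a statement about linear dependencies among columns of $\mathbf{H}$, and then to exploit the rigid block structure of $\mathbf{H}_L$ to pin down exactly which column combinations are even admissible. First I would invoke the standard fact quoted just before the statement: $d\geq 2t+2$ holds if and only if every collection of $2t+1$ columns of $\mathbf{H}$ is linearly independent, equivalently, no nonempty set $S$ of at most $2t+1$ columns of $\mathbf{H}$ sums to $\tb 0$ over $\mathbb{F}_2$. So the whole task is to characterize which index sets $S$ with $1\le |S|\le 2t+1$ can possibly give a zero column sum, and then read off the surviving condition on $\mathbf{H}_G$ alone.

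Next I would focus on the top part $\mathbf{H}_L=I_l\bigotimes\tb 1_{r+1}$ and examine what a zero sum forces there. If $S$ selects $a_i$ columns from the $i$-th block (so $0\le a_i\le r+1$ and $\sum_i a_i=|S|$), then the contribution of these columns to the $i$-th row of $\mathbf{H}_L$ is exactly $a_i\bmod 2$, since within block $i$ every chosen column contributes a single $1$ in row $i$ and $0$ elsewhere. Hence the $\mathbf{H}_L$-part of the column sum vanishes if and only if \emph{every} $a_i$ is even. This is the key structural observation: the locality constraint forces the number of columns taken from each repair group to be even, which is precisely condition (1), $0\le a_i\le\min\{2t,r+1\}$ even (the bound $a_i\le 2t$ coming for free from $\sum a_i\le 2t$). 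Combined with $|S|$ even and at most $2t$ — here I would note that $|S|$ must be even because each $a_i$ is, and that the odd value $2t+1$ is therefore automatically excluded, recovering the earlier remark that $d$ is always even — this reduces the range to $2\le\sum_i a_i\le 2t$, which is condition (2).

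Having established that the only candidate zero-sum sets are those respecting conditions (1) and (2), the residual requirement is exactly that the corresponding columns of $\mathbf{H}_G$ not sum to $\tb 0$. Writing $\{\tb c_1^i,\dots,\tb c_{a_i}^i\}$ for the chosen columns of the block $\mathbf{H}_G^i$, the surviving equation is $\sum_{i=1}^l\sum_{j=1}^{a_i}\tb c_j^i\neq\tb 0$, which is the displayed condition. I would present the argument as an equivalence: $d<2t+2$ means some admissible $S$ (necessarily with all $a_i$ even and $2\le|S|\le 2t$) yields a zero full-column sum, which by the block analysis happens iff the $\mathbf{H}_G$-parts sum to zero; contrapositive gives the ``if'' direction, and the direct implication gives ``only if.'' The one point demanding care — and the main obstacle — is making the even-block reduction airtight: one must argue both that any genuinely distance-reducing combination is forced to have even $a_i$ (so the listed conditions lose no dependencies) and, conversely, that it suffices to quantify only over even $a_i$ with $\sum a_i\le 2t$ rather than over all $|S|\le 2t+1$ (so the condition is not vacuously weaker than needed). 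I would handle this by treating the $\mathbf{H}_L$ and $\mathbf{H}_G$ rows separately and noting the $\mathbb{F}_2$ column sum is zero iff it is zero in both the $\mathbf{H}_L$-coordinates and the $\mathbf{H}_G$-coordinates simultaneously.
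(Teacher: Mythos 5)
Your proposal is correct and follows essentially the same route as the paper's own proof: both reduce $d\geq 2t+2$ to the non-vanishing of column sums of $\mathbf{H}$, split each sum into its $\mathbf{H}_L$-part and $\mathbf{H}_G$-part, observe that the $\mathbf{H}_L$-part vanishes precisely when every block count $a_i$ is even (which also forces the total to be even, hence at most $2t$), and conclude that the only surviving constraint is the stated condition on the columns of $\mathbf{H}_G$. The parity observation you flag as the ``one point demanding care'' is exactly the argument the paper uses in both directions of its proof.
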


\begin{proof}
Let $\mathbf{H}^{(i)}=\left(\begin{array}{lrc}\mathbf{H}_{L}^{i}\\[1mm]\mathbf{H}_{G}^{i}\end{array}\right)$ be the $i$-th block of $\mathbf{H},$ and $\tb h_{j}^{i}$ be the column of $\mathbf{H}^{(i)}$ such that $\tb h_{j}^{i}|_S=\tb c_{j}^{i},$ where $S=[l+1,n-k].$

We first prove the necessity. Since $d\geq 2t+2,$ we know that any $\leq d-1$ columns of $\mathbf{H}$ must be linearly independent. Then for any $a_1,a_2,\cdots,a_l$ satisfying the two conditions (1) and (2), we have $\sum_{i=1}^{l}\sum_{j=1}^{a_i}\tb h_{j}^{i}\neq \tb 0.$ Note that any even number of columns of $\mathbf{H}_{L}^{i}$ sum to $\tb0$, it follows that $\sum_{i=1}^{l}\sum_{j=1}^{a_i}\tb c_{j}^{i}\neq \tb 0.$

For the sufficiency, one only needs to show that any $2t+1$ columns of $\mathbf{H}$ are linearly independent. Then we will show that no $m$ columns of $\mathbf{H}$ sum to $\tb 0,$ for $m\in [2t+1].$ Let $m=\sum_{i=1}^{l}a_i$ and $\{\tb h_{1}^{i},\tb h_{2}^{i},\cdots,\tb h_{a_i}^{i}\}$ be a collection of any $a_i$ columns from $\mathbf{H}^{(i)}$, where $0\leq a_i\leq\min\{2t+1,r+1\}$ is an integer, for $1\leq i\leq l$. We need to show $\sum_{i=1}^{l}\sum_{j=1}^{a_i}\tb h_{j}^{i}\neq \tb 0.$

If $a_j$ is odd for some $j\in[l]$, then the $j$-th coordinate of $\sum_{i=1}^{l}\sum_{j=1}^{a_i}\tb h_{j}^{i}$ is 1, that is $\sum_{i=1}^{l}\sum_{j=1}^{a_i}\tb h_{j}^{i}\neq \tb 0.$ Then we can assume that $a_j$ is even for all $j\in[l].$ Note that $\sum_{i=1}^{l}\sum_{j=1}^{a_i}\tb h_{j}^{i}|_S=\sum_{i=1}^{l}\sum_{j=1}^{a_i}\tb c_{j}^{i}\neq \tb 0,$ it follows that $\sum_{i=1}^{l}\sum_{j=1}^{a_i}\tb h_{j}^{i}\neq \tb 0.$ This completes the proof.
\end{proof}

\subsection{$k$-optimal constructions for $d=6$}

In order to construct optimal binary LRCs with $d\geq6$, by Lemma \ref{lemd}, we can easily derive the following corollary.
\begin{corollary}\label{coro6}
Let $\mathcal{C}$ be a code that is defined by the parity check matrix $\mathbf{H}$ in Eq.(\ref{conhr}), where $l=\frac{n}{r+1}$. Then $d\geq 6$ holds if and only if the columns of $\mathbf{H}_G$ satisfy the following conditions
\begin{itemize}
\itemsep=0pt \parskip=0pt
\item[(1)] $\tb c_{1}^{i}+\tb c_{2}^{i}\neq \tb 0$ for each $i\in [l]$.\\
\item[(2)] $\tb c_{1}^{i}+\tb c_{2}^{i}+\tb c_{3}^{i}+\tb c_{4}^{i} \neq \tb 0$ for each $i\in [l]$.\\
\item[(3)] $\tb c_{1}^{i}+\tb c_{2}^{i}+\tb c_{1}^{j}+\tb c_{2}^{j} \neq \tb 0$ for $i\neq j\in [l]$.
\end{itemize}
\end{corollary}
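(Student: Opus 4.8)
The plan is to specialize the general characterization of Lemma~\ref{lemd} to the case $d\geq 6$, i.e. $t=2$, and then simply enumerate which patterns of column-sums the general condition forbids. Recall that Lemma~\ref{lemd} asserts $d\geq 2t+2$ holds if and only if $\sum_{i=1}^{l}\sum_{j=1}^{a_i}\tb c_{j}^{i}\neq \tb 0$ for every choice of nonnegative even integers $a_1,\ldots,a_l$ satisfying $0\leq a_i\leq\min\{2t,r+1\}$ and $2\leq\sum_{i=1}^{l}a_i\leq 2t$. Setting $t=2$, the constraint becomes: each $a_i$ is an even integer in $\{0,2,4\}$ (subject to $a_i\leq r+1$), and the total $\sum_i a_i$ is an even integer with $2\leq\sum_i a_i\leq 4$.

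Given that, I would carry out the enumeration of admissible $(a_1,\ldots,a_l)$-tuples, which fall into exactly the following mutually exclusive cases according to how the even mass is distributed among the blocks. First, $\sum_i a_i=2$: the only way to write $2$ as a sum of even parts is a single $a_i=2$, which forces the condition $\tb c_1^i+\tb c_2^i\neq\tb 0$ for each $i\in[l]$; this is exactly condition~(1). Second, $\sum_i a_i=4$ with all mass in one block: a single $a_i=4$, giving $\tb c_1^i+\tb c_2^i+\tb c_3^i+\tb c_4^i\neq\tb 0$ for each $i\in[l]$, which is condition~(2) (this case requires $r+1\geq 4$, otherwise it is vacuous). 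Third, $\sum_i a_i=4$ split across two distinct blocks as $a_i=a_j=2$ with $i\neq j$, yielding $\tb c_1^i+\tb c_2^i+\tb c_1^j+\tb c_2^j\neq\tb 0$ for $i\neq j\in[l]$, which is condition~(3). Since even parts cannot produce an odd total and the total is capped at $4$, these three cases exhaust all admissible tuples, so the big conjunction in Lemma~\ref{lemd} is logically equivalent to the conjunction of conditions~(1),~(2),~(3).

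Concretely, for the forward direction I would note that if $d\geq 6$ then Lemma~\ref{lemd} gives the general inequality, and restricting to each of the three tuple-shapes above immediately yields conditions~(1)--(3). For the reverse direction, assuming (1)--(3), I would observe that any admissible tuple for $t=2$ is of one of the three enumerated shapes, so the relevant column-sum is covered by one of (1)--(3) and is therefore nonzero; hence the hypothesis of Lemma~\ref{lemd} is met and $d\geq 6$ follows. The whole argument is a bookkeeping reduction, not a new estimate.

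The main thing to be careful about is the completeness of the case analysis: I must verify that no admissible tuple is missed, in particular that there is no hidden configuration arising when $r+1<4$ (so that $a_i=4$ is infeasible and condition~(2) is vacuous) or when $l<2$ (so condition~(3) is vacuous), and that the requirement ``$a_i$ even'' truly rules out any tuple with an odd total or with an odd individual part. This partition-of-$4$-into-even-parts check is elementary, so I expect no real obstacle; the only subtlety is stating the equivalence cleanly so that each forbidden pattern in the lemma corresponds to exactly one of the three listed conditions.
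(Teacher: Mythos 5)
Your proposal is correct and matches the paper's route exactly: the paper derives this corollary directly from Lemma~\ref{lemd} by the same specialization to $t=2$, and your enumeration of the even partitions of $2$ and $4$ across blocks (one $a_i=2$; one $a_i=4$; two blocks with $a_i=a_j=2$) is precisely the bookkeeping the paper leaves implicit. Your attention to the vacuous cases ($r+1<4$, $l<2$) is a sound addition but does not change the argument.
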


Now we make a brief description of our construction. First, given space $W$, we want to construct a set of vectors of $W$ satisfying (1) and (2), which can be constructed by weakly independent set. Second, let $W_i$ be the vector space generated by the columns of $\mathbf{H}_G^i$, $i\in [l]$, it follows that $\dim(W_i)\leq n-k-l$. In order to ensure condition (3), we can suppose $W_i\cap W_j=\{\tb 0\}$. Such spaces can be selected from a (partial) spread.

\begin{lemma}\label{lem4a1}
Let $V$ be a $t$-dim subspace of $\mathbb{F}_2^m$, and $\{\tb e_1,\tb e_2,\cdots,\tb e_t\}$ be a basis of $V$. Suppose there exists a binary linear code with parameters $[n,n-t,d\geq 5]$ with parity check matrix $\mathbf{H}=(\tb h_1, \tb h_2,\cdots,\tb h_n)$. Then the set $T=\{\tb 0\}\cup\{\tb f_i| i\in [n]\},$ where $\tb f_i=\sum_{j\in {\rm supp}(\tb h_i)}\tb e_j,$ is  $4$-wise weakly independent over $\mathbb{F}_2.$
\end{lemma}

\begin{proof}
It suffices to show that for any $\{i_1,\ldots,i_r\} \subseteq [n]$ and $1\leq r \leq4$, $\sum_{j=1}^{r} \tb f_{i_j}\neq \tb 0,$ since $\tb 0$ doesn't affect the sum. \Wolg, we can assume that $\sum_{i=1}^{r}\tb f_i=0,$ for $1\leq r \leq4.$ Then $\sum_{j=1}^{r}\tb h_j=\tb 0,$ which indicates that $\{\tb h_1,\tb h_2,\ldots,\tb h_r\}$ are linearly dependent. Since $\mathbf{H}$ is the parity check matrix of the code with $d\geq 5$, we infer that any $\leq 4$ columns of $\mathbf{H}$ are linearly independent. This leads to a contradiction. So $T$ must be $4$-wise weakly independent over $\mathbb{F}_2.$
\end{proof}

Below, we give two classes of $k$-optimal binary LRCs for general $r$.

\subsubsection{\tb {The first class of optimal binary LRCs}}

\begin{lemma}\cite{Macwilliams1977, Wang2017}\label{lem4a2}
There exists a binary linear code with parameters $[2^m,2^m-2m,\geq 5]$ for each integer $m\geq3$.
\end{lemma}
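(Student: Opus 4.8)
The plan is to recast the statement as the construction of a parity-check matrix and then build its columns from low-degree power functions over $\mathbb{F}_{2^m}$, following the double-error-correcting BCH code. A binary $[2^m,2^m-2m,\ge 5]$ code is exactly a full-rank $2m\times 2^m$ binary matrix $\mathbf{H}$, no $\le 4$ of whose columns sum to $\mathbf{0}$; identifying $\mathbb{F}_2^{2m}\cong\mathbb{F}_{2^m}\times\mathbb{F}_{2^m}$, I would take columns of the form $(\beta,\beta^3)^{\top}$. First I would handle the $2^m-1$ columns coming from $\beta\in\mathbb{F}_{2^m}^{\ast}$. The $4$-independence reduces to two characteristic-$2$ identities: if $\beta_1+\beta_2+\beta_3=0$ then $\beta_1^3+\beta_2^3+\beta_3^3=\beta_1\beta_2\beta_3$, and if $\beta_1+\beta_2+\beta_3+\beta_4=0$ then $\beta_1^3+\beta_2^3+\beta_3^3+\beta_4^3=(\beta_1+\beta_2)(\beta_1+\beta_3)(\beta_2+\beta_3)$. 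Since the $\beta_i$ are distinct and nonzero these products are nonzero, so no two, three, or four columns sum to $\mathbf 0$; this yields the primitive BCH code with parameters $[2^m-1,2^m-1-2m,\ge 5]$, the rank being exactly $2m$ because for $m\ge 3$ a primitive element and its cube have distinct minimal polynomials, each of degree $m$.

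The step I expect to be the genuine obstacle is gaining the \emph{single} extra coordinate, i.e. reaching length $2^m$ rather than $2^m-1$. One cannot simply append a column to the BCH matrix: the two-error-correcting BCH code has covering radius $3$, so every vector of $\mathbb{F}_2^{2m}$ already equals a sum of at most three of its columns, and any additional column would complete a zero-sum of size $\le 4$. Concretely, the naive choice $\beta=0$ gives the forbidden zero column. Thus the target code is not an extension of the BCH column set but a genuinely different configuration of $2^m$ points, which is exactly the content I would draw from the cited construction.

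A clean self-contained way to obtain a long enough configuration is to pass to the larger field: identify $\mathbb{F}_2^{2m}$ with $\mathbb{F}_{2^{2m}}$ and take the norm-one subgroup $S=\{x\in\mathbb{F}_{2^{2m}}:x^{2^m+1}=1\}$, of size $2^m+1$ and avoiding $\mathbf 0$. For $x\in S$ one has $x^{2^m}=x^{-1}$, and I would use this to dispatch the zero-sum conditions directly. The $4$-term condition holds for every $m$: if four distinct elements of $S$ summed to $\mathbf 0$, applying the Frobenius $x\mapsto x^{2^m}$ would also force the sum of their inverses to vanish, so the monic quartic with these roots would have vanishing first and third elementary symmetric functions and hence equal $(T^2+\sqrt{e_2}\,T+\sqrt{e_4})^2$, a perfect square with at most two distinct roots — a contradiction. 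The $3$-term condition is where the parity of $m$ enters: a relation $x+y+z=\mathbf 0$ forces $\{x,y,z\}$ to be a coset of the cube roots of unity, which lies in $S$ only when $m$ is odd. Hence for $m$ even, $S$ is already a zero-free $4$-independent set, and shortening one coordinate gives the desired $[2^m,2^m-2m,\ge 5]$ code; for $m$ odd the cube-root triples prevent $S$ from being a cap and this route falls short of length $2^m$, so there I would instead invoke the explicit construction of \cite{Macwilliams1977, Wang2017}. In either case the remaining bookkeeping is to confirm that the chosen $2^m$ columns have row rank exactly $2m$, so that the redundancy, and hence the dimension $2^m-2m$, is attained.
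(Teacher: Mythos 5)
Your proposal contains a lot of correct and genuinely insightful work, but first a point of reference: the paper itself gives no proof of this lemma --- it is imported verbatim from \cite{Macwilliams1977, Wang2017}, and the object behind that citation is the binary Goppa code $\Gamma(L,g)$ with $L=\mathbb{F}_{2^m}$ and $g$ an irreducible quadratic over $\mathbb{F}_{2^m}$, which has parameters $[2^m,2^m-2m,\ge 5]$ for every $m\ge 3$. Measured against that, your BCH identities are correct, your observation that the two-error-correcting BCH code is quasi-perfect (covering radius $3$) correctly explains why length $2^m$ cannot be reached by extending the BCH column set, and your norm-one circle $S=\{x\in\mathbb{F}_{2^{2m}}:x^{2^m+1}=1\}$ does work for even $m$: the Frobenius/perfect-square argument for $4$-sums and the cube-roots-of-unity analysis for $3$-sums are both right (this is essentially the Zetterberg-code point set). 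Even the rank ``bookkeeping'' you defer is easily closed: $\sum_{x\in S}x$ is the first elementary symmetric function of the roots of $T^{2^m+1}-1$, hence zero, so deleting one point does not shrink the span; alternatively the Hamming bound gives $k\le 2^m-2m$ for any binary code of length $2^m$ with $d\ge 5$, so the dimension is exactly $2^m-2m$ once $k\ge 2^m-2m$ is known.

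The genuine gap is odd $m$: for half the values of $m$ covered by the lemma you prove nothing and instead ``invoke the explicit construction of \cite{Macwilliams1977, Wang2017}'' --- that is, you cite the very statement being proved, which is circular. The missing idea is a small perturbation of your own argument: replace the unit circle by the \emph{inverted line} $P=\{(x+\theta)^{-1}: x\in\mathbb{F}_{2^m}\}$, where $\theta\in\mathbb{F}_{2^{2m}}\setminus\mathbb{F}_{2^m}$. These are $2^m$ distinct nonzero points of $\mathbb{F}_{2^{2m}}\cong\mathbb{F}_2^{2m}$. If $w\le 4$ distinct points of $P$ summed to zero, put $f(z)=\prod_{i=1}^{w}(z+x_i)$; since $\sum_i (z+x_i)^{-1}=f'(z)/f(z)$ and $f(\theta)\neq 0$, the relation says exactly $f'(\theta)=0$. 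In characteristic $2$, $f'$ is the square of a polynomial of degree at most $1$ with coefficients in $\mathbb{F}_{2^m}$ (for $w\le 2$ it is a nonzero constant, an immediate contradiction); such a square can vanish at $\theta\notin\mathbb{F}_{2^m}$ only if it is identically zero, and $f'=0$ forces $f$ to be a perfect square, contradicting the distinctness of $x_1,\dots,x_w$. So $P$ has no vanishing sum of size at most $4$ for \emph{every} $m\ge 3$, with no shortening step and no parity condition on $m$ --- and this is precisely the Goppa code behind the citation, obtained by the same perfect-square trick you already use, applied to a line through $\theta$ rather than to the norm-one circle.
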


\begin{construction}\label{conr}
Let $r=2^t$, and $\{W_1, W_2, \cdots, W_a\}$ be a maximum partial $2t$-spread of $\mathbb{F}_{2}^{s}$. We denote a basis of $W_i$ by $\{\tb e_{1}^{(i)},\tb e_{2}^{(i)},\cdots,\tb e_{2t}^{(i)}\}.$ When $t\geq3$, by Lemma \ref{lem4a2}, there exists a code with parameters $[2^t,2^t-2t,\geq 5]_2$. Let $T^{(i)}$, for $i\in[a]$, be the set given by Lemma \ref{lem4a1}. When $t=1, 2$, we define $T^{(i)}=\{\tb 0, \tb e_{1}^{(i)},\tb e_{2}^{(i)},\cdots,\tb e_{2t}^{(i)}\}.$ Let $\mathbf{H}_G^i$ be an $s \times (2^t+1)$ matrix whose columns are the vectors in $T^{(i)}.$ Then we can define a binary LRC $\ml{C}$ with parity check matrix $\mathbf{H}$ given in Eq.(\ref{conhr}), where $\frac{s}{r}<l\leq a$.
\end{construction}

\begin{theorem}\label{thm4.1}
The code $\ml{C}$ obtained from Construction \ref{conr} is an $[n=(2^t+1)l,k\geq\frac{rn}{r+1}-s, d\geq6;r=2^t]_2$ LRC. Moreover, when
$$\frac{2^{s-1}-1}{2^{t-1}(2^t+1)} < l\leq A_2(s,2t,4t),$$ we have $k=\frac{rn}{r+1}-s, d=6$, which is optimal with respect to bound (\ref{boundodd}).
\end{theorem}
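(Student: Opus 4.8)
The plan is to verify, in order, the locality and repair-group structure, the dimension, the distance lower bound $d\ge 6$, and finally the exact equality $k=\frac{rn}{r+1}-s$ together with optimality and $d=6$. I would start with the cheap structural facts. Since $\mathbf{H}_L=I_l\otimes \tb 1_{r+1}$ with $l=\frac{n}{r+1}$ and $n=(2^t+1)l=(r+1)l$, its rows are weight-$(r+1)$ dual codewords whose supports partition $[n]$, which gives locality $r=2^t$, disjoint repair groups, and (since the rows sum to $\tb 1_n$) forces the minimum distance to be even. The matrix $\mathbf{H}$ has $l+s$ rows, because each $\mathbf{H}_G^i$ has $s$ rows; hence $k=n-\mathrm{rank}(\mathbf{H})\ge n-l-s=\frac{rn}{r+1}-s$. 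For $d\ge 6$ I would invoke Corollary \ref{coro6} and check its three conditions on the columns of $\mathbf{H}_G$, whose entries in block $i$ are precisely the vectors of $T^{(i)}\subseteq W_i$.

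Conditions (1) and (2) are internal to a single block and follow from $T^{(i)}$ being $4$-wise weakly independent (by Lemma \ref{lem4a1} for $t\ge3$, and trivially as a basis together with $\tb 0$ for $t=1,2$). Any two or four chosen columns are distinct elements of $T^{(i)}$; at most one of them is $\tb 0$, and after discarding it the remaining two, three, or four nonzero elements cannot sum to $\tb 0$ by weak independence, so (1) and (2) hold. Condition (3) is the cross-block step: if $\tb c_1^i+\tb c_2^i=\tb c_1^j+\tb c_2^j$ with $i\ne j$, then this common vector lies in $W_i\cap W_j=\{\tb 0\}$ by the partial spread property, forcing $\tb c_1^i+\tb c_2^i=\tb 0$ and contradicting (1). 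This yields $d\ge 6$.

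For the parameter range, the upper bound $l\le A_2(s,2t,4t)$ is exactly what allows selecting $l$ members of a maximal partial $2t$-spread. To upgrade $k\ge\frac{rn}{r+1}-s$ to equality I must show $\mathrm{rank}(\mathbf{H})=l+s$. Since each $T^{(i)}$ spans $W_i$, the column space of $\mathbf{H}_G$ is $W_1+\cdots+W_l$, and I first argue this equals $\mathbb{F}_2^s$: if its dimension were $\le s-1$, disjointness of the spread would give $1+l(2^{2t}-1)\le 2^{s-1}$, i.e. $l\le\frac{2^{s-1}-1}{2^{2t}-1}$, contradicting $l>\frac{2^{s-1}-1}{2^{t-1}(2^t+1)}\ge\frac{2^{s-1}-1}{2^{2t}-1}$. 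Thus $\mathrm{rank}(\mathbf{H}_G)=s$. Finally, every block of $\mathbf{H}_G$ has a zero column (as $\tb 0\in T^{(i)}$), so any vector in the row space of $\mathbf{H}_G$ vanishes at one coordinate of each block; if such a vector were also in the row space of $\mathbf{H}_L$ (constant on each block) it would be $\tb 0$. Hence the two row spaces meet trivially, $\mathrm{rank}(\mathbf{H})=l+s$, and $k=\frac{rn}{r+1}-s$. To match bound (\ref{boundodd}) I note that for $d=6$ one has $\lfloor\frac{d-1}{4}\rfloor=1$, so its right-hand side is $\frac{rn}{r+1}-\lceil\log_2(1+l\binom{r+1}{2})\rceil$ with $\binom{r+1}{2}=2^{t-1}(2^t+1)$. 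The lower bound on $l$ gives $1+l\binom{r+1}{2}>2^{s-1}$, so the ceiling is $\ge s$, while $l\le A_2(s,2t,4t)\le\frac{2^s-1}{2^{2t}-1}\le\frac{2^s-1}{\binom{r+1}{2}}$ gives $1+l\binom{r+1}{2}\le 2^s$, so the ceiling is $\le s$; hence it equals $s$ and the code is $k$-optimal.

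The step I expect to be the main obstacle is establishing $d=6$ rather than merely $d\ge6$. My plan is to rule out $d\ge8$ using the even-distance bound (\ref{boundeven}) with $t=3$, which remains valid for any code of distance $\ge8$: its extra summand $\frac{1}{\lfloor n/4\rfloor}\bigl(l\binom{r+1}{4}+\binom{l}{2}\binom{r+1}{2}^2\bigr)$ is of order $lr\binom{r+1}{2}$, which dwarfs $2^s$ and pushes the logarithm strictly above $s$, contradicting $k=\frac{rn}{r+1}-s$; hence $d=6$. The delicate point is verifying that this extra term genuinely crosses the next power of two even when $1+l\binom{r+1}{2}$ sits just above $2^{s-1}$, which the order-of-magnitude estimate makes safe but should be checked carefully. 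A constructive alternative would be to exhibit a weight-$6$ codeword as a cross-block relation $\tb c_1^i+\tb c_2^i+\tb c_1^j+\tb c_2^j+\tb c_1^k+\tb c_2^k=\tb 0$ on three blocks, which the spanning identity $W_1+\cdots+W_l=\mathbb{F}_2^s$ and the dimension surplus $2tl>s$ make plausible; this route is clean for $t=1$ (where the realized $2$-sums exhaust $W_i\setminus\{\tb 0\}$ and transversal lines to three spread members always exist) but requires more care for larger $t$, so I would fall back on the bound-based argument as the primary route.
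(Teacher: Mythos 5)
Your proposal is correct, and its skeleton coincides with the paper's: the same verification of $d\ge 6$ (conditions (1),(2) of Corollary \ref{coro6} from $4$-wise weak independence via Lemma \ref{lem4a1}, condition (3) from $W_i\cap W_j=\{\mathbf{0}\}$), the same computation $2^{s-1}<1+l\binom{r+1}{2}\le 2^s$ pinning the ceiling in bound (\ref{boundodd}) at $s$, and the same exclusion of $d\ge 8$ via bound (\ref{boundeven}). Where you genuinely diverge is the equality $k=\frac{rn}{r+1}-s$: the paper never computes $\mathrm{rank}(\mathbf{H})$; it simply squeezes the row-count estimate $k\ge n-l-s$ against the upper bound $k\le\frac{rn}{r+1}-s$ obtained by applying (\ref{boundodd}) to the constructed code. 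You instead prove $\mathrm{rank}(\mathbf{H})=l+s$ directly, showing $W_1+\cdots+W_l=\mathbb{F}_2^s$ by a counting contradiction with the lower bound on $l$, and showing the row spaces of $\mathbf{H}_L$ and $\mathbf{H}_G$ meet trivially (each block of $\mathbf{H}_G$ has a zero column, while row-space vectors of $\mathbf{H}_L$ are constant on blocks). This costs extra work that the squeeze renders unnecessary, but it determines the dimension independently of the bound. Your handling of the $d\ge 8$ exclusion is also slightly more careful than the paper's, and this turns out to matter: the paper drops the term $l\binom{r+1}{4}$ and asserts $1+\frac{rn}{2}+\frac{\binom{l}{2}\binom{r+1}{2}^2}{n/4}>2^s$, which actually fails in the admissible edge case $l=1$ (possible when $t\ge 3$ and $s=2t$, where $\binom{l}{2}=0$), whereas with the full summand you retain, the inequality holds for every admissible $l$. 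Concretely, your flagged ``delicate point'' resolves as follows: $l=2$ never lies in the optimality range; for $l\ge 3$ one has $\frac{\binom{l}{2}\binom{r+1}{2}^2}{n/4}=(l-1)r\binom{r+1}{2}\ge 1+l\binom{r+1}{2}>2^{s-1}$, so the total exceeds $2^s$; and for $l=1$ the term $\binom{r+1}{4}\big/\big\lfloor\frac{r+1}{4}\big\rfloor=\frac{(r^2-1)(r-2)}{6}$ alone suffices since $r\ge 8$ there.
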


\begin{proof}
First, in view of the definition of parity check matrix $\mathbf{H}$, we get $n=(2^t+1)l, k\geq\frac{rn}{r+1}-s$ and locality $r=2^t$. Then we need to verify $d\geq6$. It suffices to show that the matrix $\mathbf{H}_G$ satisfies the conditions in Corollary \ref{coro6}. It follows from Lemma \ref{lem4a1} that $\mathbf{H}_G$ satisfies conditions (1) and (2). Notice that for any two vectors $\tb c_{1}^{i}, \tb c_{2}^{i}$ in $\mathbf{H}_{G}^{i}$, $\tb c_{1}^{i}+\tb c_{2}^{i}$ is a nonzero vector in subspace $W_i$. Then $\tb c_{1}^{i}+\tb c_{2}^{i}$ and $\tb c_{1}^{j}+\tb c_{2}^{j}$ are two distinct vectors since $W_i \cap W_j=\{\tb 0\},$ for $i\neq j.$ So condition (3) is satisfied.

When $\frac{2^{s-1}-1}{2^{t-1}(2^t+1)} < l\leq A_2(s,2t,4t)$, we now show the optimality of the code $\ml{C}$. In fact, for an $[n,k,d\geq 6;r]_2$ binary linear LRC with disjoint repair groups, we know $k\leq \frac{rn}{r+1}-\lc\log_2(1+\frac{r}{2}n)\rc$ by our bound in Eq.(\ref{boundodd}). Since $\frac{2^{s-1}-1}{2^{t-1}(2^t+1)} < l\leq A_2(s,2t,4t),$ we have $2^{s-1}<  1+\frac{rn}{2}\leq 2^s.$ It follows that $\lc\log_2(1+\frac{r}{2}n)\rc=s,$ which implies $k\leq \frac{rn}{r+1}-s.$ So we get $k=\frac{rn}{r+1}-s,$ which is optimal with respect to bound (\ref{boundodd}). Finally, in this case, we have to show that $d=6.$ In fact, if $d>6$, we have $d\geq 8$ since the minimum distance of $\ml{C}$ must be even. Notice that $1+\frac{rn}{2}+\frac{\binom{l}{2}\binom{r+1}{2}\binom{r+1}{2}}{n/4} > 2^s.$ Then by our bound in Eq.(\ref{boundeven}), we obtain $k<\frac{rn}{r+1}-s$, which contradicts the previous conclusion $k=\frac{rn}{r+1}-s.$
\end{proof}

\begin{remark}
Taking $2t |s$, $s\geq 4t$ and $l=A_2(s,2t,4t)=\frac{2^s-1}{2^{2t}-1}$ in Construction \ref{conr}, we get the $k$-optimal binary linear LRCs constructed in \cite{Wang2017}.
\end{remark}

\begin{example}\label{exgr2}
Let $s\equiv u \pmod {2t}$, we know $A_2(s,2t,4t)\geq\frac{2^s-2^{2t}(2^u-1)-1}{2^{2t}-1}$ by Lemma \ref{lemspread1}. It follows that whenever $\frac{2^{s-1}-1}{2^{t-1}(2^t+1)} < l\leq \frac{2^s-2^{2t}(2^u-1)-1}{2^{2t}-1}$, we can always construct an $[n=(2^t+1)l,k=\frac{rn}{r+1}-s, d=6;r=2^t]_2$ LRC, which is optimal with respect to bound (\ref{boundodd}) by Construction \ref{conr}. For each pair of $r$ and $s$, the above construction contains a number of optimal examples. (See Figure 1 below for an illustration.)

\begin{figure}[h]
\center
\begin{tabular}{|c|c|c|c|}
\hline
$s$  &    $l$       &$n$     &$k$      \\\hline
$4$  &$[3,5]$     &$3l$    &$2l-4$   \\\hline
$5$  &$[6,9]$     &$3l$    &$2l-5$   \\\hline
$6$  &$[11,21]$    &$3l$    &$2l-6$   \\ \hline
$7$  &$[22,41]$  &$3l$    &$2l-7$   \\ \hline
\end{tabular}\\
\caption{Optimal binary LRCs with disjoint repair groups for $r=2, d=6.$}
\end{figure}
\end{example}

\subsubsection{\tb {The second class of optimal binary LRCs}}

\begin{lemma}\cite{Macwilliams1977}\label{lem4a3}
There exists a binary linear code with parameters $[2^t+2^{\lf(t+1)/2\rf}-1,2^t+2^{\lf(t+1)/2\rf}-2t-2, 5]$, where $t\geq3$.
\end{lemma}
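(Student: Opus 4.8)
The plan is to exhibit an explicit parity-check matrix. Since the asserted redundancy is $n-k=2t+1$ with $n=2^t+2^{\lf(t+1)/2\rf}-1$, I would build a $(2t+1)\times n$ binary matrix $\mathbf H$ whose columns are distinct, nonzero, and admit no vanishing sum of three or four of them; by the criterion recalled just before Lemma \ref{lemd} (minimum distance $\geq d$ iff any $d-1$ columns are independent) this is exactly $d\geq5$. Organising the coordinates as $\mathbb{F}_2^{2t+1}\cong\mathbb{F}_{2^t}\oplus\mathbb{F}_{2^t}\oplus\mathbb{F}_2$, I would first lay down a BCH-type core of the $2^t-1$ columns $(x,x^3,0)^{\top}$, $x\in\mathbb{F}_{2^t}^{*}$. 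That no four core columns cancel is the two-error-correcting BCH property, and it is proved by the very Vandermonde argument already used in the proof following Construction \ref{congener}: a vanishing combination yields $\mathbf{M}x=\tb 0$ with $\mathbf{M}$ a Vandermonde matrix of full rank, a contradiction.

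Next I would adjoin $2^{\lf(t+1)/2\rf}$ ``flagged'' columns $(a,g(a),1)^{\top}$, where $a$ runs over a set $U\subseteq\mathbb{F}_{2^t}$ of size $2^{\lf(t+1)/2\rf}$ and $g$ is an auxiliary map to be chosen; the final coordinate flags these apart from the core. Verifying $d\geq5$ is casework on the number of flagged columns in a putative vanishing subset of size $\leq4$, which must be even. The case of no flag is the core; the four-flag case asks that $\{(a,g(a)):a\in U\}$ carry no two equal pairwise sums, i.e.\ be a Sidon configuration in $\mathbb{F}_{2^t}\times\mathbb{F}_{2^t}$; and the two-flag cases (with one or two core columns) reduce, after writing $s=a_1+a_2$ and $q=g(a_1)+g(a_2)$ and using $x^3+y^3=s^3+s\,xy$ when $x+y=s$, to the twin requirements $q\neq s^3$ and that the quadratic $z^2+sz+(q+s^3)/s$ have no roots in $\mathbb{F}_{2^t}^{*}$. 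The latter is the trace condition ${\rm Tr}\big((q+s^3)/s^3\big)=1$, and since ${\rm Tr}(1)=t\bmod2$ this target value depends on the parity of $t$---which is exactly the source of the floor $\lf(t+1)/2\rf$ in the statement.

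The hard part is the joint choice of $U$ and $g$: one must arrange, for every pair from $U$, that $q\neq s^3$ and that the above trace equals $1$, while also keeping $\{(a,g(a))\}$ Sidon, and still reach the full count $2^{\lf(t+1)/2\rf}$. The two parities behave differently. For $t$ even one can model $U$ on a genuine additive Sidon set of size $2^{t/2}$---for instance $\{(x,x^3):x\in\mathbb{F}_{2^{t/2}}\}$ under $\mathbb{F}_{2^t}\cong\mathbb{F}_{2^{t/2}}^2$---and then perturb the middle coordinate to meet the trace condition, the naive $g=(\cdot)^3$ failing precisely because it reproduces core columns. For $t$ odd the required size $2^{(t+1)/2}$ already exceeds the one-dimensional Sidon bound, so the four-flag cancellations must instead be ruled out through the two-dimensional Sidon-ness of $\{(a,g(a))\}$ together with the injectivity of cubing. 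Driving $|U|$ up to exactly $2^{\lf(t+1)/2\rf}$, the largest size at which all four obstruction types can be defeated at once, is the crux and the true content of the lemma. Since Lemma \ref{lem4a3} only cites \cite{Macwilliams1977}, one may alternatively just invoke the known tables of double-error-correcting codes; the above is how I would prove it from scratch.
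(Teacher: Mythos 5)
The paper contains no proof of this lemma at all: the statement is imported wholesale from MacWilliams--Sloane \cite{Macwilliams1977}, and the citation is the entire justification (for instance, the $[11,4,5]$ code invoked right after Theorem \ref{thm4.2} is the $t=3$ instance). So the only part of your proposal that coincides with the paper is your closing sentence, where you concede that one may simply invoke the known result.

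Judged as the from-scratch proof you intend it to be, your proposal has a genuine gap, and you identify it yourself. The setup is sound: the splitting $\mathbb{F}_2^{2t+1}\cong\mathbb{F}_{2^t}\oplus\mathbb{F}_{2^t}\oplus\mathbb{F}_2$, the BCH core $(x,x^3,0)^{\top}$, the parity argument forcing an even number of flagged columns into any vanishing sum, the reduction of the two-flag cases via $x^3+y^3=s^3+s\,xy$ to the trace condition ${\rm Tr}\bigl((q+s^3)/s^3\bigr)=1$ (which indeed also subsumes the requirement $q\neq s^3$), and the Sidon condition for the four-flag case are all correct. But the lemma is an existence claim, and existence is exactly what you never establish: you do not construct, for either parity of $t$, a pair $(U,g)$ with $|U|=2^{\lfloor(t+1)/2\rfloor}$ satisfying all the pairwise trace conditions and the two-dimensional Sidon condition simultaneously. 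For $t$ even you gesture at ``perturbing'' $\{(x,x^3):x\in\mathbb{F}_{2^{t/2}}\}$ without specifying the perturbation or verifying anything about it, and for $t$ odd you only explain why the naive choice cannot work. A proof whose ``crux and true content'' (your words) is left as an unproved target establishes nothing beyond the BCH code of length $2^t-1$; the whole point of the lemma is the extra $2^{\lfloor(t+1)/2\rfloor}$ columns. Either supply an explicit admissible $(U,g)$, or fall back, as the paper does, on citing \cite{Macwilliams1977} -- but then the preceding construction is scaffolding, not a proof.
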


\begin{construction}\label{conrs}
Let $r=2^t+2^{\lf(t+1)/2\rf}-1$, and $\{W_1, W_2, \cdots, W_a\}$ be a maximum partial $(2t+1)$-spread of $\mathbb{F}_{2}^{s}$. Denote a basis of $W_i$ by $\{\tb e_{1}^{(i)},\tb e_{2}^{(i)},\cdots,\tb e_{2t+1}^{(i)}\}$. When $t\geq3$, by Lemma \ref{lem4a3}, there exists a binary code with parameters $[2^t+2^{\lf(t+1)/2\rf}-1,2^t+2^{\lf(t+1)/2\rf}-2t-2, 5]$. Let $T^{(i)}$, for $i\in[a]$, be the set given by Lemma \ref{lem4a1}. When $t=1, 2$, we define $T^{(i)}=\{\tb 0, \tb e_{1}^{(i)},\tb e_{2}^{(i)},\cdots,\tb e_{2t+1}^{(i)}\}.$ Let $\mathbf{H}_G^i$ be an $s \times (2^t+1)$ matrix whose columns are the vectors in $T^{(i)}.$ Then we can define a binary LRC $\ml{C}$ with parity check matrix $\mathbf{H}$ given in Eq.(\ref{conhr}), where $\frac{s}{r}<l\leq a$.
\end{construction}

\begin{theorem}\label{thm4.2}
The code $\ml{C}$ obtained from Construction \ref{conrs} is an $[n=(r+1)l,k\geq\frac{rn}{r+1}-s, d\geq6;r=2^t+2^{\lf(t+1)/2\rf}-1]_2$ LRC. Moreover, when $$\frac{2^{s}-2}{(2^t+2^{\lf(t+1)/2\rf}-1)(2^t+2^{\lf(t+1)/2\rf})} < l\leq A_2(s,2t+1,4t+2),$$
we have $k=\frac{rn}{r+1}-s, d=6$, which is optimal with respect to bound (\ref{boundodd}).
\end{theorem}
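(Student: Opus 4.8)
The plan is to follow the same template that proved Theorem \ref{thm4.1}, since Construction \ref{conrs} is structurally parallel to Construction \ref{conr} with the locality changed from $r=2^t$ to $r=2^t+2^{\lf(t+1)/2\rf}-1$ and the spread dimension changed from $2t$ to $2t+1$. First I would establish the basic parameters: by construction the number of rows of $\mathbf{H}$ is $l+s$, so $k\geq\frac{rn}{r+1}-s$, the matrix $\mathbf{H}_L$ forces locality $r=2^t+2^{\lf(t+1)/2\rf}-1$, and $n=(r+1)l$ is immediate.

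Next I would verify $d\geq6$ by checking the three conditions of Corollary \ref{coro6}. Conditions (1) and (2) follow because each column set $T^{(i)}$ is $4$-wise weakly independent over $\mathbb{F}_2$: for $t\geq3$ this is exactly the output of Lemma \ref{lem4a1} applied to the code of Lemma \ref{lem4a3}, and for $t=1,2$ it holds trivially since $T^{(i)}$ consists of $\tb 0$ together with basis vectors of $W_i$, which are linearly independent. For condition (3), I would note that for any two columns $\tb c_1^i,\tb c_2^i$ of $\mathbf{H}_G^i$ the sum $\tb c_1^i+\tb c_2^i$ is a nonzero vector lying in $W_i$; since $W_i\cap W_j=\{\tb 0\}$ for $i\neq j$ (they come from a partial spread), the vectors $\tb c_1^i+\tb c_2^i$ and $\tb c_1^j+\tb c_2^j$ are distinct, so their sum is nonzero, giving condition (3).

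For the optimality claim, I would invoke bound (\ref{boundodd}) in the form $k\leq\frac{rn}{r+1}-\lc\log_2(1+\frac{r}{2}n)\rc$ for $d=6$, and argue that the hypothesis $\frac{2^s-2}{(2^t+2^{\lf(t+1)/2\rf}-1)(2^t+2^{\lf(t+1)/2\rf})}<l\leq A_2(s,2t+1,4t+2)$ pins down $\lc\log_2(1+\frac{r}{2}n)\rc=s$. Substituting $n=(r+1)l$ and $r+1=2^t+2^{\lf(t+1)/2\rf}$, the quantity $1+\frac{r}{2}n=1+\frac{r(r+1)}{2}l$ should satisfy $2^{s-1}<1+\frac{r(r+1)}{2}l\leq 2^s$ precisely on the stated range of $l$; the lower bound on $l$ gives the left inequality and $l\leq A_2(s,2t+1,4t+2)\leq\frac{2^s-1}{2^{2t+1}-1}$ should give the right one. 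This forces $k=\frac{rn}{r+1}-s$. Finally, to rule out $d\geq8$, I would mimic the end of the proof of Theorem \ref{thm4.1}: if $d\geq8$ then bound (\ref{boundeven}) applies with its extra term $\frac{\binom{l}{2}\binom{r+1}{2}^2}{n/4}>0$, forcing $k<\frac{rn}{r+1}-s$ and contradicting the value just obtained, so $d=6$ exactly.

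The main obstacle I anticipate is the arithmetic verification that the stated interval for $l$ translates exactly into $2^{s-1}<1+\frac{r}{2}n\leq 2^s$; in particular one must check that the quantity $1+\frac{r}{2}n$ here uses the \emph{true} locality $r=2^t+2^{\lf(t+1)/2\rf}-1$ consistently, and that the upper bound $A_2(s,2t+1,4t+2)\leq\frac{2^s-1}{2^{2t+1}-1}$ is enough to keep $1+\frac{r}{2}n\leq 2^s$. A secondary subtlety worth confirming is that the set $T^{(i)}$ genuinely has $r+1=2^t+2^{\lf(t+1)/2\rf}$ elements (the $\tb 0$ vector plus $n$ columns from Lemma \ref{lem4a3}, whose block length is $2^t+2^{\lf(t+1)/2\rf}-1$), so that $\mathbf{H}_G^i$ indeed has $r+1$ columns matching $\mathbf{H}_L$; this consistency between the code length in Lemma \ref{lem4a3} and the repair-group size is the one place where the two constructions differ from each other, and it must be tracked carefully.
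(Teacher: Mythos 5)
Your proposal is correct and is exactly the argument the paper intends: the paper's own proof of Theorem \ref{thm4.2} is simply the remark that it is analogous to Theorem \ref{thm4.1}, and your write-up (parameters, Corollary \ref{coro6} via Lemma \ref{lem4a1} and the partial spread, then the two-sided estimate pinning $\lc\log_2(1+\frac{r}{2}n)\rc=s$ and the bound (\ref{boundeven}) argument to exclude $d\geq 8$) is precisely that analogue, including your correct observation that $|T^{(i)}|=r+1=2^t+2^{\lf(t+1)/2\rf}$ so that $\mathbf{H}_G^i$ matches the block width of $\mathbf{H}_L$. The one step you left as ``should satisfy'' does go through: with $r+1=2^t+2^{\lf(t+1)/2\rf}$ one has $\frac{r(r+1)}{2}\leq 2^{2t+1}-1$ (e.g.\ $6\leq 7$ for $t=1$, $15\leq 31$ for $t=2$, and $\frac{(r+1)^2}{2}\leq 2^{2t-1}+2^{(3t+1)/2+1}<2^{2t+1}$ for $t\geq 3$), so $l\leq A_2(s,2t+1,4t+2)\leq\frac{2^s-1}{2^{2t+1}-1}$ yields $1+\frac{r}{2}n\leq 1+\frac{r(r+1)}{2}\cdot\frac{2^s-1}{2^{2t+1}-1}\leq 2^s$, which together with the lower bound on $l$ confirms $\lc\log_2(1+\frac{r}{2}n)\rc=s$.
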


\begin{proof}
The argument is analogous to that in Theorem \ref{thm4.1}. We omit it here.
\end{proof}

\begin{example}
Taking $t=3,$ we obtain a binary linear code with parameters $[11,4,5]$. Let $\{W_1, W_2, \cdots, W_{129}\}$ be a $7$-spread of $\mathbb{F}_{2}^{14}$. Put $125\leq l\leq 129$. Then we obtain an $[n=12l, k=11l-14, d=6; r=11]_2$ LRC by Theorem \ref{thm4.2}, which is optimal with respect to bound (\ref{boundodd}).
\end{example}

\begin{example}\label{exreq3}
Taking $t=1,$ we have $r=3.$ Then we obtain an optimal $[n=4l, k=3l-s, d=6; r=3]_2$ LRC by Theorem \ref{thm4.2} whenever $\frac{2^{s}-2}{12} < l\leq A_2(s,3,6)$. (See Figure 2 below for an illustration.)
\begin{figure}[h]
\center
\begin{tabular}{|c|c|c|c|}
\hline
$s$  &    $l$       &$n$     &$k$      \\\hline
$6$  &$[6,9]$     &$4l$    &$3l-6$   \\\hline
$7$  &$[11,17]$     &$4l$    &$3l-7$   \\\hline
$8$  &$[22,34]$    &$4l$    &$3l-8$   \\ \hline
\end{tabular}\\
\caption{Optimal binary LRCs with disjoint repair groups for $r=3, d=6.$}
\end{figure}
\end{example}

%
%

\subsection{Almost complete constructions for optimal LRCs with  $r\in\{2,3\}$}

When taking $t=1$, according to Construction \ref{conr} (or Construction \ref{conrs}), we could obtain $k$-optimal binary LRCs with $r=2$ (or $r=3$). It seems that for any integer $l$, we can always get optimal LRCs with parameters $[3l,2l-s,6;2]$, except for one class of $l$ (see Example \ref{exgr2}). In fact, we could prove the following consequences, that is, Lemma \ref{lem413} and Theorem \ref{thm414}. However, in the case of $r=3$, for many values of $l$, we can't obtain $k$-optimal LRCs by means of Construction \ref{conrs} (see Example \ref{exreq3}). In this subsection, we will give a different construction for the case $r=3$ by modifying Construction \ref{conr} slightly, which could cover almost all integers $l$ except for only one class.

We first give a comprehensive analysis of the above optimal binary linear LRCs with parameters $[n,k,6;2]_2$. Without loss of generality, we set $A_2(3,2,4):=1$. Now we define
$$N_m:=[A_2(2m-1,2,4)+2,A_2(2m+1,2,4)].$$
So we have
\begin{lemma}\label{lem413}
The set $\cup_{m=2}^{\infty} N_m$ covers all the positive integers larger than 2, except for $\frac{2^{2m+1}-2}{3}.$
\end{lemma}

\begin{proof}
Notice that $N_m=[A_2(2m-1,2,4)+2,A_2(2m+1,2,4)]=[\frac{2^{2(m-1)+1}-2}{3}+1,\frac{2^{2m+1}-2}{3}-1]$, the conclusion is clear.
\end{proof}

\begin{theorem}\label{thm414}
Let $n=3l$ such that $l\neq \frac{2^{2m+1}-2}{3}$, where $m\geq 2$ is an integer. Then there exists an $[n,k,6; 2]_2$ binary linear LRC with parameter
$$k=\left\{\begin{array}{cc} 2l-2m & \text{if } l\in [A_2(2m-1,2,4)+2,A_2(2m,2,4)],\\[1mm]
2l-2m-1 & \text{if } l\in [A_2(2m,2,4)+1,A_2(2m+1,2,4)],
\end{array}\right.$$
which is optimal with respect to our bound in Eq.(\ref{boundodd}).
\end{theorem}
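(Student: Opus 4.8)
The plan is to reduce the statement entirely to Theorem \ref{thm4.1} (i.e. to Construction \ref{conr} with $t=1$, hence $r=2$) and to show that, as the ambient dimension $s$ of the partial $2$-spread runs through the integers, the optimality windows produced by that theorem tile all admissible values of $l$ exactly as recorded in Lemma \ref{lem413}. The first thing I would note is that for $t=1$ the bound (\ref{boundodd}) reads $k\le 2l-\lceil\log_2(1+3l)\rceil$, so the two claimed values $k=2l-2m$ and $k=2l-2m-1$ correspond precisely to $\lceil\log_2(1+3l)\rceil=2m$ and $\lceil\log_2(1+3l)\rceil=2m+1$; certifying optimality in each case is thus the same as placing $l$ in the asserted interval.

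Next I would fix $l>2$ with $l\neq\frac{2^{2m+1}-2}{3}$ for every $m$, and invoke Lemma \ref{lem413} to obtain the unique integer $m\geq2$ with $l\in N_m=[A_2(2m-1,2,4)+2,\,A_2(2m+1,2,4)]$. I would then split $N_m$ at $A_2(2m,2,4)$ into the two subintervals appearing in the statement and handle each by the appropriate choice of $s$ in Construction \ref{conr}: take $s=2m$ when $l\in[A_2(2m-1,2,4)+2,\,A_2(2m,2,4)]$, and $s=2m+1$ when $l\in[A_2(2m,2,4)+1,\,A_2(2m+1,2,4)]$. In either case Construction \ref{conr} already supplies a genuine $[3l,\,k\ge 2l-s,\,d\ge6;2]_2$ LRC with disjoint repair groups, so the only remaining task is to verify that $l$ lies in the optimality window $\frac{2^{s-1}-1}{3}<l\le A_2(s,2,4)$ of Theorem \ref{thm4.1}, which then yields $k=2l-s$ and $d=6$.

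The heart of the argument is therefore the arithmetic verification of these two window memberships, and this is where the residues of $2^{s}$ modulo $3$ do the work. Using the formulas $A_2(2m,2,4)=\frac{2^{2m}-1}{3}$ and $A_2(2m\pm1,2,4)=\frac{2^{2m\pm1}-5}{3}$, I would rewrite the left endpoints as $A_2(2m-1,2,4)+2=\frac{2^{2m-1}+1}{3}$ and $A_2(2m,2,4)+1=\frac{2^{2m}+2}{3}$, and then check the strict inequalities $\frac{2^{2m-1}+1}{3}>\frac{2^{2m-1}-1}{3}$ in the $s=2m$ case and $\frac{2^{2m}+2}{3}>\frac{2^{2m}-1}{3}$ in the $s=2m+1$ case; the upper endpoints $A_2(2m,2,4)$ and $A_2(2m+1,2,4)$ coincide verbatim with the upper limit $A_2(s,2,4)$ of the respective window. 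Substituting $s=2m$ gives $k=2l-2m$ and $s=2m+1$ gives $k=2l-2m-1$, exactly the two asserted formulas; I would also remark that the lower window threshold $\frac{2^{s-1}-1}{3}$ already exceeds $\frac{s}{2}$ for $s\ge 4$, so the construction's own applicability constraint $\frac{s}{r}<l$ is subsumed and imposes nothing extra.

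I expect no genuine structural obstacle here, since distance and locality are handed to us by Theorem \ref{thm4.1}; the only mildly delicate point is the bookkeeping with floors/ceilings and the modulo-$3$ residues when matching the interval endpoints to the threshold $\frac{2^{s-1}-1}{3}$. Finally I would observe, for context, that the single excluded value $l=\frac{2^{2m+1}-2}{3}$ is precisely the integer sitting in the gap between consecutive $N_m$ flagged in Lemma \ref{lem413}: there $l>A_2(2m+1,2,4)$, so the required partial $2$-spread in $\mathbb{F}_2^{2m+1}$ would have to exceed its maximal size, while enlarging $s$ would only decrease $k$ below the bound—hence the construction cannot attain the optimum for that one class, which is exactly why it is omitted.
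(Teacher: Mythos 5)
Your proposal is correct and takes essentially the same route as the paper: the paper's entire proof is the one-line reduction to Theorem \ref{thm4.1} with $t=1$ and $s=2m$ or $s=2m+1$, which is exactly your argument. Your explicit endpoint arithmetic (rewriting $A_2(2m-1,2,4)+2=\frac{2^{2m-1}+1}{3}$ and $A_2(2m,2,4)+1=\frac{2^{2m}+2}{3}$ and comparing with the threshold $\frac{2^{s-1}-1}{3}$) simply fills in the verification the paper leaves implicit, and does so correctly.
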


\begin{proof}
It can be derived directly from Theorem \ref{thm4.1} by taking $t=1$ and $s=2m$ or $2m+1.$
\end{proof}

\begin{example}\label{exr2}
Let $l=4\in [A_2(3,2,4)+2,A_2(4,2,4)]\subseteq N_2$. Then we have $2m=4$. We first construct a $2$-spread in $\mathbb{F}_2^4$. Let $\alpha$ be a primitive element of $\mathbb{F}_{2^4}$, and $\{1,\alpha,{\alpha}^2,{\alpha}^3\}$ form a basis for $\mathbb{F}_{2^4}$ over $\mathbb{F}_2.$ Set $\beta=\alpha^5$, thus we have a $2$-spread
$\Big\{W_i={\rm span}_{\mathbb{F}_2}\{\alpha^i,\alpha^i\beta\} | 0\leq i\leq 4\Big\}.$ Without loss of generality, we choose the first $4$  subspaces of them. Thus we obtain an optimal $[12,4,6;2]_2$ LRC with parity check matrix
\[\mathbf{H}=
\left(\begin{array}{cccccccccccc}
1 & 1 & 1 & 0 & 0 & 0 & 0 & 0 & 0 & 0 & 0 & 0 \\
0 & 0 & 0 & 1 & 1 & 1 & 0 & 0 & 0 & 0 & 0 & 0 \\
0 & 0 & 0 & 0 & 0 & 0 & 1 & 1 & 1 & 0 & 0 & 0 \\
0 & 0 & 0 & 0 & 0 & 0 & 0 & 0 & 0 & 1 & 1 & 1 \\
0 & 0 & 0 & 0 & 0 & 1 & 0 & 0 & 1 & 0 & 1 & 0 \\
0 & 0 & 1 & 0 & 0 & 1 & 0 & 1 & 0 & 0 & 0 & 1 \\
0 & 0 & 1 & 0 & 1 & 0 & 0 & 0 & 1 & 0 & 0 & 0 \\
0 & 1 & 0 & 0 & 0 & 0 & 0 & 0 & 1 & 0 & 0 & 1
\end{array}\right).
\]
\end{example}

\begin{remark}
For the case $l=\frac{2^{2m+1}-2}{3}$, note that $A_2(2m+1,2,4)<l<A_2(2m+2,2,4)$. Taking $s=2m+2$ and $t=1$ in Construction \ref{conr}, we can construct an $[n=2^{2m+1}-2,k\geq\frac{2^{2m+2}-4}{3}-(2m+2),d\geq 6;r=2]_2$ LRC, which is nearly optimal, since $k\leq \frac{2^{2m+2}-4}{3}-2m-1$ by our bound in Eq.(\ref{boundodd}).
\end{remark}

Now we give an almost complete construction for optimal binary LRCs with $r=3$. We modify the parity check matrix $\mathbf{H}$ of $r=2$ by adding a column and a row in each $\mathbf{H}^{(i)}.$

\begin{construction}\label{conr3}
Let $\{W_1, W_2, \cdots, W_a\}$ be a maximum partial $2$-spread of $\mathbb{F}_{2}^{s}$. Denote a basis of $W_i$ by $\{\tb e_{1}^{(i)},\tb e_{2}^{(i)}\}$. Then we can define a binary LRC $\ml{C}$ with parity check matrix $\mathbf{H}$ given in Eq.(\ref{conhr}) for $\frac{s+1}{3}<l\leq a$, where the submatrices $\mathbf{H}_G^i, i\in[l]$ are defined as follows:
$$\mathbf{H}_G^i=\left(\begin{array}{cccc}
\tb 0   &  \tb e_{1}^{(i)}   & \tb e_{2}^{(i)}  &  \tb e_{1}^{(i)}+\tb e_{2}^{(i)}\\[1mm]
1   &  0   & 0  &  0
\end{array}\right).$$
\end{construction}

\begin{remark}\label{remarkr3}
An argument similar to that of Theorem \ref{thm4.1} shows that the code $\ml{C}$ obtained from Construction \ref{conr3} is an $[n=4l,k\geq 3l-s-1, d\geq6;r=3]_2$ LRC.
\end{remark}

The following theorem says that for almost all $n$ that can be divided by $4$, we have a $k$-optimal LRC.

\begin{theorem}
Let $n=4l$ such that $l\neq \frac{2^{2m+1}-2}{3}$, where $m\geq 2$ is an integer. Then there exists an $[n=4l,k,d=6;r=3]_2$ binary linear LRC with parameter
$$k=\left\{\begin{array}{cc} 3l-2m-1 & \text{if } l\in [A_2(2m-1,2,4)+2,A_2(2m,2,4)],\\[1mm]
3l-2m-2 & \text{if } l\in [A_2(2m,2,4)+1,A_2(2m+1,2,4)],
\end{array}\right.$$
which is optimal with respect to our bound in Eq.(\ref{boundodd}).
\end{theorem}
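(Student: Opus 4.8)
The plan is to run the same argument as in Theorem~\ref{thm414}, replacing Construction~\ref{conr} by the modified Construction~\ref{conr3}. Since $l>2$ and $l\neq\frac{2^{2m+1}-2}{3}$ for all $m$, Lemma~\ref{lem413} places $l$ in a unique interval $N_m=[A_2(2m-1,2,4)+2,A_2(2m+1,2,4)]$ with $m\geq 2$. I split this interval at $A_2(2m,2,4)$ and set $s=2m$ on the lower half and $s=2m+1$ on the upper half. In both cases $l\leq A_2(s,2,4)$, so a maximum partial $2$-spread of $\mathbb{F}_2^{s}$ provides the $l$ subspaces $W_1,\dots,W_l$ demanded by Construction~\ref{conr3}; the side condition $\frac{s+1}{3}<l$ is immediate from the left endpoints of the two sub-intervals because $2^{2m-1}$ dominates $m$ for $m\geq2$. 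By Remark~\ref{remarkr3} the resulting code is an $[n=4l,\,k\geq 3l-s-1,\,d\geq 6;\,r=3]_2$ LRC with disjoint repair groups. In verifying $d\geq6$ through Corollary~\ref{coro6}, the one delicate point is condition~(2): the four columns $(\mathbf{0};1),(\mathbf{e}_1^{(i)};0),(\mathbf{e}_2^{(i)};0),(\mathbf{e}_1^{(i)}+\mathbf{e}_2^{(i)};0)$ of a block sum to $(\mathbf{0};1)\neq\mathbf{0}$, so the extra coordinate is exactly what rescues the condition.

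Next I would pin down $k$ by matching the construction against bound~(\ref{boundodd}). For $d=6$, $r=3$ and $n=4l$ one has $\lfloor\frac{d-1}{4}\rfloor=1$, so the weight-enumerator sum collapses to $1+l\binom{4}{2}=1+6l$ and (\ref{boundodd}) reads $k\leq 3l-\lceil\log_2(1+6l)\rceil$. Thus the whole statement reduces to the arithmetic identity $\lceil\log_2(1+6l)\rceil=s+1$. Recalling $A_2(2m-1,2,4)=\frac{2^{2m-1}-5}{3}$, $A_2(2m,2,4)=\frac{2^{2m}-1}{3}$ and $A_2(2m+1,2,4)=\frac{2^{2m+1}-5}{3}$, the lower sub-interval becomes $[\frac{2^{2m}+2}{6},\frac{2^{2m+1}-2}{6}]$ and the upper one $[\frac{2^{2m+1}+4}{6},\frac{2^{2m+2}-10}{6}]$; in each case a one-line check gives $2^{s}<1+6l\leq 2^{s+1}$, hence $\lceil\log_2(1+6l)\rceil=s+1$. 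Combined with $k\geq 3l-s-1$ this forces $k=3l-s-1$, i.e.\ $k=3l-2m-1$ on the lower half and $k=3l-2m-2$ on the upper half, and simultaneously certifies optimality with respect to (\ref{boundodd}).

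Finally I would upgrade $d\geq6$ to $d=6$. Because the all-ones vector lies in $\mathcal{C}^{\bot}$, every codeword has even weight, so $d$ is even; if $d\neq6$ then $d\geq8$. The packing argument behind (\ref{boundeven}) uses only lower bounds on distances, so its bound with parameter $d=8$ applies to any such code of minimum distance at least $8$; for $r=3$ and $n=4l$ its logarithm argument equals $1+6l+\frac{l+36\binom{l}{2}}{l}=24l-16$. The left endpoints $l\geq\frac{2^{2m}+2}{6}$ (resp.\ $l\geq\frac{2^{2m+1}+4}{6}$) give $24l-16>2^{s+1}$, whence $\lceil\log_2(24l-16)\rceil\geq s+2$ and $k\leq 3l-s-2$, contradicting $k=3l-s-1$. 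Hence $d=6$. I expect the genuine work to be concentrated in these two $\lceil\log_2(\cdot)\rceil$ evaluations at the interval endpoints and in checking that Construction~\ref{conr3} still satisfies all three conditions of Corollary~\ref{coro6}; the remainder is a direct transcription of the $r=2$ proof of Theorem~\ref{thm414}.
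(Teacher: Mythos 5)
Your proposal is correct and follows essentially the same route as the paper: the same split of $N_m$ at $A_2(2m,2,4)$ with $s=2m$ or $s=2m+1$, Construction \ref{conr3} plus Remark \ref{remarkr3} for the lower bound $k\geq 3l-s-1$, bound (\ref{boundodd}) giving $\lceil\log_2(1+6l)\rceil=s+1$ for the matching upper bound, and a contradiction against bound (\ref{boundeven}) (your $24l-16>2^{s+1}$ computation) to rule out $d\geq 8$. The only difference is that you spell out the endpoint arithmetic and the condition-(2) check that the paper leaves as ``a simple computation'' and as a citation of Remark \ref{remarkr3}, and your checks are accurate.
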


\begin{proof}
{\bf Case 1: $l\in [A_2(2m-1,2,4)+2,A_2(2m,2,4)]$}.

In this case, we can take $s=2m$ in Construction \ref{conr3}, then we know that the code $\ml{C}$ obtained from Construction \ref{conr3} is an $[n=4l,k\geq 3l-2m-1, d\geq6;r=3]_2$ LRC by Remark \ref{remarkr3}. On the other hand, we get $k\leq 3l-\lc\log_2(1+6l)\rc=3l-2m-1$ by our bound in Eq.(\ref{boundodd}). Thus we have $k= 3l-2m-1,$ which attains the bound in Eq.(\ref{boundodd}).

Next, we will show $d=6.$ In fact, if $d>6$, we have $d\geq 8$ since the minimum distance of $\ml{C}$ must be even. After a simple computation, we obtain $k<3l-2m-1$ by our bound in Eq.(\ref{boundeven}), which contradicts the previous conclusion $k= 3l-2m-1.$

{\bf Case 2: $l\in [A_2(2m,2,4)+1,A_2(2m+1,2,4)]$}.

We can take $s=2m+1$ in Construction \ref{conr3}. The remainder of the argument is similar to that used in {\bf Case 1}. We omit it here. This completes the proof.
\end{proof}

\begin{example}
By Example \ref{exr2} and Construction \ref{conr3}, we could construct an optimal $[16,7,6;3]_2$ LRC with parity check matrix
\[\mathbf{H}=
\left(
  \begin{array}{cccccccccccccccc}
    1 & 1 & 1 & 1 & 0 & 0 & 0 & 0 & 0 & 0 & 0 & 0 & 0 & 0 & 0 & 0 \\
    0 & 0 & 0 & 0 & 1 & 1 & 1 & 1 & 0 & 0 & 0 & 0 & 0 & 0 & 0 & 0 \\
    0 & 0 & 0 & 0 & 0 & 0 & 0 & 0 & 1 & 1 & 1 & 1 & 0 & 0 & 0 & 0 \\
    0 & 0 & 0 & 0 & 0 & 0 & 0 & 0 & 0 & 0 & 0 & 0 & 1 & 1 & 1 & 1 \\
    0 & 0 & 0 & 0 & 0 & 0 & 1 & 1 & 0 & 0 & 1 & 1 & 0 & 1 & 0 & 1 \\
    0 & 0 & 1 & 1 & 0 & 0 & 1 & 1 & 0 & 1 & 0 & 1 & 0 & 0 & 1 & 1 \\
    0 & 0 & 1 & 1 & 0 & 1 & 0 & 1 & 0 & 0 & 1 & 1 & 0 & 0 & 0 & 0 \\
    0 & 1 & 0 & 1 & 0 & 0 & 0 & 0 & 0 & 0 & 1 & 1 & 0 & 0 & 1 & 1 \\
    1 & 0 & 0 & 0 & 1 & 0 & 0 & 0 & 1 & 0 & 0 & 0 & 1 & 0 & 0 & 0 \\
  \end{array}
\right).
\]
\end{example}

\begin{remark}
The above example also attains the C-M bound (\ref{alphabound}). Furthermore, if we take 5 subspaces from a $2$-spread in $\mathbb{F}_2^4$, we can construct a $[20,10,6;3]_2$ LRC, which also attains the C-M bound (\ref{alphabound}).
\end{remark}

\begin{remark}
For the case $l=\frac{2^{2m+1}-2}{3}$, note that $A_2(2m+1,2,4)<l<A_2(2m+2,2,4)$. Taking $s=2m+2$ in Construction \ref{conr3}, we can construct an $[n=\frac{4 (2^{2m+1}-2)}{3},k\geq2^{2m+1}-2m-5,d\geq 6;r=3]_2$ LRC, which is nearly optimal, because $k\leq 2^{2m+1}-2m-4$ by our bound in Eq.(\ref{boundodd}).
\end{remark}

\section{Discussions and concluding remarks}\label{sec8}

In this section, we will discuss how to construct an optimal binary linear LRC with $d\geq8$. We assume $r=2$ for convenience. It is straightforward to get the following corollary by Lemma \ref{lemd}.

\begin{corollary}\label{coro8}
Let $\mathcal{C}$ be a code that is defined by the parity check matrix $\mathbf{H}$ in Eq.(\ref{conhr}), where $n=3l$. Then $d\geq 8$ holds if and only if the columns of $\mathbf{H}_G$ satisfy the following conditions
\begin{itemize}
\itemsep=0pt \parskip=0pt
\item[(1)] $\tb c_{1}^{u}+\tb c_{2}^{u}\neq \tb 0$ for each $u\in [l]$.\\
\item[(2)] $\tb c_{1}^{u}+\tb c_{2}^{u}+\tb c_{1}^{v}+\tb c_{2}^{v} \neq \tb 0$ for $1 \leq u<v\leq l$.\\
\item[(3)] $\tb c_{1}^{u}+\tb c_{2}^{u}+\tb c_{1}^{v}+\tb c_{2}^{v}+\tb c_{1}^{w}+\tb c_{2}^{w} \neq \tb 0$ for $1 \leq u<v<w\leq l$.
\end{itemize}
\end{corollary}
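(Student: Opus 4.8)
The plan is to invoke Lemma \ref{lemd} directly with the parameter $t=3$, so that the target distance becomes $d\geq 2t+2=8$, and then to unwind exactly which tuples $(a_1,\dots,a_l)$ arise under the hypotheses of that lemma when $r=2$. First I would observe that here $r+1=3$, so that the admissible values of each $a_i$ are the even integers with $0\leq a_i\leq\min\{2t,r+1\}=\min\{6,3\}=3$; the only such integers are $a_i\in\{0,2\}$. This is the single observation that drives the whole argument: the upper bound $r+1=3$ together with the evenness requirement collapses each $a_i$ to a binary choice, so that no block can contribute more than two columns.

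Next I would use the remaining constraint $2\leq\sum_{i=1}^{l}a_i\leq 2t=6$. Since every $a_i$ lies in $\{0,2\}$, the total $\sum_{i} a_i$ is an even integer in $\{2,4,6\}$, and is determined by the number of indices $i$ at which $a_i=2$. Thus there are exactly three cases, according to whether one, two, or three of the blocks contribute a pair of columns (all other blocks contributing none). I would then translate each case into the displayed conditions: $\sum_{i} a_i=2$ selects a single block $u$ together with two of its columns, giving $\tb c_1^u+\tb c_2^u\neq\tb 0$, i.e.\ condition (1); $\sum_{i} a_i=4$ selects two blocks $u<v$ with two columns each, giving condition (2); and $\sum_{i} a_i=6$ selects three blocks $u<v<w$, giving condition (3). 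Conversely, every tuple meeting the hypotheses of Lemma \ref{lemd} falls into one of these three shapes, so the conjunction of conditions (1)--(3) is precisely the statement that $\sum_{i=1}^{l}\sum_{j=1}^{a_i}\tb c_j^i\neq\tb 0$ for all admissible tuples. By Lemma \ref{lemd}, this conjunction is equivalent to $d\geq 8$.

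There is essentially no hard step here; the only point requiring care is the verification that the three displayed conditions are \emph{exhaustive}, which amounts to checking that no admissible tuple can have an entry larger than $2$ or an odd entry. Both facts are immediate from $\min\{2t,r+1\}=3$ and the evenness of each $a_i$, so the corollary follows at once from Lemma \ref{lemd}.
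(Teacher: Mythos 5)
Your proposal is correct and is exactly the argument the paper intends: the paper derives Corollary \ref{coro8} directly from Lemma \ref{lemd} (calling it ``straightforward''), and your unwinding with $t=3$, $r+1=3$, hence $a_i\in\{0,2\}$ and $\sum_i a_i\in\{2,4,6\}$, is precisely the case analysis that makes this explicit.
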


Clearly, conditions (1) and (2) can be guaranteed by a maximum partial $2$-spread $S$. In order to ensure condition (3), we define property P as follows. A subset $Y$ of $S$ is said to have the property P if any three distinct subspaces of $Y$, denoted by $W_i, W_j, W_k$, satisfy $\dim(W_i+W_j+W_k)=6$. So condition (3) could be satisfied if we have such a $Y$.

\begin{construction}\label{cond8}
Let $S=\{W_1, W_2, \cdots, W_a\}$ be a maximum (partial) $2$-spread of $\mathbb{F}_{2}^{s}$. Let $Y$ have the property P with maximum possible size $b$. Write $Y=\{W_{y_1}, W_{y_2}, \cdots, W_{y_b}\}.$ Denote a basis of $W_{y_i}$ by $\{\tb e_{1}^{(i)},\tb e_{2}^{(i)}\}$. Then we can define an $[n=3l, k\geq 2l-s, d\geq8; r=2]_2$ binary LRC $\ml{C}$ with parity check matrix $\mathbf{H}$ given in Eq.(\ref{conhr}) for $\frac{s}{2}<l\leq b$, where the submatrices $\mathbf{H}_G^i, i\in[l]$ are defined as follows:
$$\mathbf{H}_G^i=\left(\begin{array}{ccc}
\tb 0   &  \tb e_{1}^{(i)}   & \tb e_{2}^{(i)}
\end{array}\right).$$
\end{construction}

\begin{example}
Let $\alpha$ be the primitive element of $\mathbb{F}_{2^6},$ and its minimal polynomial be $x^6+x^4+x^3+x+1$. Let $S$ be a $2$-spread of $\mathbb{F}_{2}^{6}$, we know $|S|=A_{2}(6,2,4)=21$. In fact, $S$ can be constructed as follows: $W_i={\rm span}_{\mathbb{F}_2}\{\alpha^i,\beta\alpha^i\}$ for $0\leq i\leq 20$, where $\beta=\alpha^{21}$. We have a possible set $Y=\{W_1, W_2, W_3, W_4, W_{10}, W_{19}\}$ by MAGMA. Taking $l=6$, we can derive an $[18,6,8;2]_2$ LRC with the parity check matrix
\[\mathbf{H}=
\left(\begin{array}{cccccccccccccccccc}
1&1&1&0&0&0&0&0&0&0&0&0&0&0&0&0&0&0 \\
0&0&0&1&1&1&0&0&0&0&0&0&0&0&0&0&0&0 \\
0&0&0&0&0&0&1&1&1&0&0&0&0&0&0&0&0&0 \\
0&0&0&0&0&0&0&0&0&1&1&1&0&0&0&0&0&0 \\
0&0&0&0&0&0&0&0&0&0&0&0&1&1&1&0&0&0 \\
0&0&0&0&0&0&0&0&0&0&0&0&0&0&0&1&1&1 \\
1&0&1&0&0&0&0&0&0&0&1&1&1&1&0&0&1&1 \\
0&1&1&1&0&1&0&0&0&0&1&1&0&0&0&1&0&1 \\
0&1&1&0&1&1&1&0&1&0&0&0&0&0&0&1&1&0 \\
0&1&1&0&1&1&0&1&1&1&1&0&0&1&1&1&0&1 \\
0&0&0&0&1&1&0&1&1&0&0&0&1&1&0&0&1&1 \\
0&0&0&0&0&0&0&1&1&0&1&1&1&0&1&1&0&1
\end{array}\right),
\]
which is optimal with respect to the bound (\ref{boundeven}).
\end{example}

\begin{remark}
If we take $l=5$ in the example above, we get a binary linear LRC with parameters $[15,4,8;2]_2$. Moreover, we can obtain binary linear LRCs with parameters $[20,8,8;3]_2$ and $[24,11,8;3]_2$ by using the method in Construction \ref{conr3}. All of these examples are optimal with respect to bound (\ref{boundeven}).
\end{remark}

For $d=6$, by means of $4$-weakly independent sets and (partial) spreads, we construct optimal binary LRCs. For $d=8$, we propose a method to obtain optimal binary LRCs. That is, we need to construct a set at first, whose elements are certain subspaces in the (partial) spread. And this set should have the property that the dimension of the sum of arbitrary three subspaces in this set equals the sum of the dimensions of the three subspaces (for $r=2,3$). For bigger $r$, if we want to get $k$-optimal LRCs, we may also need to consider $6$-weakly independent sets. In addition, it would be interesting to give constructions for general $d$, which match the bound in Theorem \ref{theorembound}. We will consider this problem as a future work.

\end{document}